\title{\bf A New Approximation Guarantee for Monotone Submodular Function Maximization via Discrete Convexity}
\author{Tasuku Soma\footnote{Supported by JSPS Grant-in-Aid for Young Scientists (Start-up) Grand Number JP16H06676.}\\
The University of Tokyo\\
\texttt{tasuku\_soma@mist.i.u-tokyo.ac.jp}
\and
Yuichi Yoshida\footnote{Supported by JST ERATO Grant Number JPMJER1305 and JSPS KAKENHI Grant Number JP17H04676}\\
National Institute of Informatics\\
\texttt{yyoshida@nii.ac.jp}
}
\newcommand{\bbR}{\mathbb{R}}
\newcommand{\bbZ}{\mathbb{Z}}
\newcommand{\caL}{\mathcal{L}}
\newcommand{\bd}{{\bm d}}
\newcommand{\bp}{{\bm p}}
\newcommand{\bw}{{\bm w}}
\newcommand{\bx}{{\bm x}}
\newcommand{\by}{{\bm y}}
\newcommand{\bv}{{\bm v}}
\newcommand{\bM}{\mathbf{M}}
\newcommand{\caI}{\mathcal{I}}
\newcommand{\bfone}{\mathbf{1}}
\newcommand{\bfzero}{\mathbf{0}}
\newcommand{\M}{M$^\natural$}
\renewcommand{\emptyset}{\varnothing}
\newcommand{\set}[1]{\{#1\}}
\DeclareMathOperator{\E}{\mathbf{E}}
\DeclareMathOperator{\argmin}{argmin}
\DeclareMathOperator{\Hess}{Hess}
\DeclareMathOperator{\Diag}{Diag}
\newtheorem{theorem}{Theorem}
\newtheorem{lemma}{Lemma}
\theoremstyle{definition}
\newtheorem{definition}{Definition}
\newtheorem{remark}{Remark}
\newtheorem{example}{Example}
\begin{document}

\maketitle
\begin{abstract}
    In monotone submodular function maximization, approximation guarantees based on the curvature of the objective function have been extensively studied in the literature.
    However, the notion of curvature is often pessimistic, and we rarely obtain improved approximation guarantees, even for very simple objective functions.

    In this paper, we provide a novel approximation guarantee by extracting an \M-concave function $h:2^E \to \bbR_+$, a notion in discrete convex analysis, from the objective function $f:2^E \to \bbR_+$.
    We introduce the notion of $h$-curvature, which measures how much $f$ deviates from $h$, and show that we can obtain a $(1-\gamma/e-\epsilon)$-approximation to the problem of maximizing $f$ under a cardinality constraint in polynomial time for any constant $\epsilon > 0$.
    Then, we show that we can obtain nontrivial approximation guarantees for various problems by applying the proposed algorithm.
\end{abstract}

\thispagestyle{empty}
\setcounter{page}{0}

\newpage


\section{Introduction}
A set function $f:2^E \to \bbR$ is called \emph{monotone} if $f(X) \leq f(Y)$ for any $X \subseteq Y \subseteq E$ and called \emph{submodular} if $f(X)+f(Y) \geq f(X \cap Y)+f(X \cup Y)$ for any $X,Y \subseteq E$.
In monotone submodular function maximization under a cardinality constraint, given a nonnegative monotone submodular function $f:2^E \to \bbR_+$ and $k \in \bbZ_+$, we want to compute a set $S \subseteq E$ with $|S| \leq k$ that maximizes $f(S)$.
This simple problem includes various optimization problems in theory and practice, such as facility location, combinatorial auctions~\cite{Dobzinski:2010dt,Lehmann:2006kb}, viral marketing in social networks~\cite{Kempe:2003iu}, and sensor placement~\cite{Krause:2008vo,Krause:2011fz}.

Although monotone submodular function maximization is an NP-hard problem, the greedy algorithm achieves a $(1-1/e)$-approximation, and this approximation ratio is known to be tight~\cite{Nemhauser:1978dm}.
In practice, however, the greedy algorithm exhibits an approximation ratio much better than $1-1/e$, sometimes close to one~\cite{Krause:2008vo}.
This gap between the theoretical guarantee and the practical performance in ``real-world'' instances has been a major mystery in submodular function maximization.
The first attempt to explain this gap goes back to Conforti and G{\'e}rard~\cite{Conforti:1984ig}.
They introduced a parameter called the (total) \emph{curvature} of a monotone submodular function, with which we can derive a tighter approximation ratio.
\begin{definition}[Curvature]
  The curvature $c$ of a monotone submodular function $f: 2^E \to \bbR_+$ is
  \[
    c := 1 - \min_{i \in E} \frac{f(i \mid E - i)}{f(i)}.
  \]
\end{definition}
Here, $f(i)$ and $f(i \mid E-i)$ are shorthand for $f(\set{i})$ and $f(E)-f(E-i)$, respectively.
We note that $f(i \mid E -i) \leq f(i)$ holds from the submodularity of $f$; hence, $c \in [0,1]$ holds.
Roughly speaking, the curvature measures how close it is to a modular function, where a set function $g:2^E \to \bbR$ is called \emph{modular} if $g(X)+g(Y)=g(X\cap Y)+g(X\cup Y)$ for every $X,Y \subseteq E$.
Indeed, we can easily observe that $c = 0$ if and only if $f$ is modular.

It is shown in~\cite{Conforti:1984ig} that the greedy method achieves a $(1-e^{-c})/c$-approximation, which is at least $1-1/e$ and tends to $1$ as $c \to 0$.
The approximation has recently improved to $1-c/e-\epsilon$ for any $\epsilon > 0$ by using a more sophisticated algorithm by Sviridenko, Vondr\'{a}k, and Ward~\cite{Sviridenko2015}.
Since the curvature is easy to analyze, it has been shown that we can obtain refined approximation guarantees for various settings by exploiting the curvature~\cite{Iyer2013,Balkanski2016a}.

However, we might ask \emph{does the curvature explain the gap completely?}
The answer seems to be negative.
The concept of curvature is still unsatisfactory because monotone submodular functions in practical applications are often far from modular functions, and the curvature does not explain why we can obtain high approximation ratios.
For instance, let us consider a very simple function $f(X) = \sqrt{|X|}$.
Since this function has a curvature of $1-O(1/\sqrt{n})$ for $n = |E|$, the approximation guarantee is (roughly) $1-1/e$, while the greedy algorithm obviously finds an optimal solution!

\subsection{Our contributions}

To narrow the gap discussed above, we consider a larger and richer class of functions that are easy to maximize in this work.
For such a class of functions, we exploit \M-concave functions, introduced in the discrete convex analysis literature~\cite{Murota:1999fi}.
A set function $f : 2^E \to \bbR$ is called \emph{\M-concave}\footnote{For a set function, \M-concave functions are essentially equivalent to \emph{valuated matroids}.} if, for any $X, Y \subseteq E$ and $i \in X \setminus Y$, either
(i) $f(X) + f(Y) \leq f(X - i) + f(Y + i)$
or
(ii) there exists $j \in Y \setminus X$ such that $f(X) + f(Y) \leq f(X - i + j) + f(Y + i - j)$.
Intuitively speaking, we can increase the sum $f(X)+f(Y)$ by making $X$ and $Y$ closer, which resembles concave functions.
We note that an \M-concave function is submodular, and an \M-concave function can be maximized in polynomial time with the greedy algorithm (see, e.g.,~\cite{Murota2003}).
Hence, \M-concave functions can be regarded as a class of ``easy'' submodular functions.

Now, we start with the following observation.
For a monotone submodular function $f: 2^E \to \bbR_+$, we define a modular function $h$ as $h(X):= \sum_{i \in X} f(i \mid E \setminus i)$.
Then, $f$ can be decomposed as $f = g + h$, where $g$ is another monotone submodular function.
One can show that $h(X) \geq (1-c)f(X)$ for any $X \subseteq E$, where $c$ is the curvature of $f$.
The modular function $h$ can be regarded as an ``easy'' part of $f$ and $g$ as a ``difficult'' part, and the curvature measures the contribution of the modular part $h$ to the entire function $f$.
Observing that any modular function is \M-concave, we can think of decompositions into a monotone submodular function and an \M-concave function, which gives the following definition:
\begin{definition}[$h$-curvature]
  Suppose that a monotone submodular function $f:2^E \to \bbR_+$ is decomposed as $g+h$, where $g:2^E \to \bbR_+$ is monotone submodular, and $h:2^E \to \bbR_+$ is \M-concave.
  Then, the \emph{$h$-curvature} $\gamma_h$ of $f$ against $h$ is the minimum value of $\gamma$ such that $h(X) \geq (1-\gamma) f(X)$ for every $X \subseteq E$, or equivalently
  \[
    \gamma_h = 1 - \min_{X \subseteq E}\frac{h(X)}{f(X)} = \max_{X \subseteq E}\frac{g(X)}{f(X)},
  \]
  where we conventionally assume that $\frac{0}{0} := \infty$.
\end{definition}
We note that the $h$-curvature is defined for a decomposition $f=g+h$, whereas the standard curvature is defined for the function $f$ itself.
By the argument above, we can always find a decomposition $f=g+h$ such that the $h$-curvature is no greater than the standard curvature.

Our main contribution is the following generalization of the results of~\cite{Sviridenko2015,Murota2003} in terms of the $h$-curvature.
\begin{theorem}\label{thm:main}
  Let $f:2^E \to \bbR_+$ be a monotone submodular function that can be decomposed as $f=g+h$, where $g$ is monotone submodular, and $h$ is \M-concave.
  Assume that we have value oracles of $g$ and $h$.
  Then, for an integer $k \in \bbZ_+$ and a constant $\epsilon > 0$, we can find a random subset $X$ of size $k$ such that $\E[f(X)] \geq (1-\gamma_h / e-\epsilon)f(O)$ in polynomial time, where $O$ is an optimal solution.
\end{theorem}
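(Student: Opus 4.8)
The plan is to combine the continuous-greedy machinery of Sviridenko, Vondr\'{a}k and Ward~\cite{Sviridenko2015} for the submodular part $g$ with the polyhedral theory of \M-concave functions for the part $h$, so as to pay the $1-1/e$ factor only on $g$. Concretely, since $f(O)=g(O)+h(O)$ and $g(O)\le\gamma_h f(O)$, it suffices to produce $X$ of size at most $k$ (then padded up to size $k$ using monotonicity of $f$) with $\E[f(X)]\ge(1-1/e)g(O)+h(O)-\epsilon f(O)$, which already implies $\E[f(X)]\ge(1-\gamma_h/e-\epsilon)f(O)$.

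\emph{Fractional stage.} Let $G\colon[0,1]^E\to\bbR_+$ be the multilinear extension of $g$, let $P_k:=\{\bm x\in[0,1]^E:\|\bm x\|_1\le k\}$, and let $\widehat h\colon P_k\to\bbR_+$ be the concave closure of $h$ over sets of size at most $k$. Discrete convex analysis provides that $\widehat h$ and its supergradients are computable from the value oracle of $h$, and that for every $\bm c\in\bbR^E$ one has $\max_{\bm v\in P_k}\bigl(\langle\bm c,\bm v\rangle+\widehat h(\bm v)\bigr)=\max_{|S|\le k}\bigl(\langle\bm c,\bfone_S\rangle+h(S)\bigr)$, the right-hand side being computable by the greedy algorithm because $\langle\bm c,\cdot\rangle+h$ is \M-concave. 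Guess a value $\theta$ with $h(O)-\epsilon f(O)\le\theta\le h(O)$ (polynomially many guesses suffice, since $h(O)\le f(O)$ and $\max_{|S|\le k}h(S)$ is computable), and run the continuous greedy $\dot{\bm x}(t)=\bm v(t)$ with $\bm x(0)=\bfzero$, where the search direction is $\bm v(t)\in\argmax\{\langle\nabla G(\bm x(t)),\bm v\rangle:\bm v\in P_k,\ \widehat h(\bm v)\ge\theta\}$; by the displayed identity and Lagrangian duality this subproblem reduces to a one-dimensional convex optimization whose evaluations are greedy \M-concave maximizations, hence is solvable in polynomial time. Since $\widehat h(\bfone_O)=h(O)\ge\theta$, the vector $\bfone_O$ is feasible for every subproblem, so $\langle\nabla G(\bm x(t)),\bm v(t)\rangle\ge\langle\nabla G(\bm x(t)),\bfone_O\rangle\ge g(O)-G(\bm x(t))$ and the usual analysis yields $G(\bm x(1))\ge(1-1/e)g(O)$; at the same time $\bm x(1)=\int_0^1\bm v(t)\,dt$ is an average of points on which $\widehat h$ is at least $\theta$, so concavity of $\widehat h$ and Jensen's inequality give $\widehat h(\bm x(1))\ge\theta$. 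Discretizing into $\mathrm{poly}(|E|)/\epsilon$ steps and estimating $\nabla G$ by sampling decreases $G(\bm x(1))$ by at most $\epsilon f(O)$ and, because the discrete iterate is an exact average of the directions, leaves $\widehat h(\bm x(1))\ge\theta$ untouched.

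\emph{Rounding stage, the main obstacle.} It remains to round $\bm x(1)\in P_k$ to an integral $X$ with $|X|\le k$ so that \emph{simultaneously} $\E[g(X)]\ge G(\bm x(1))$ and $\E[h(X)]\ge\widehat h(\bm x(1))$; this is the delicate point, because for a general submodular function the concave closure strictly overshoots every attainable value and ordinary pipage or swap rounding would only decrease it. The key structural fact, again supplied by discrete convex analysis, is that $\widehat h$ is piecewise linear and that the maximal cells on which it is linear are (generalized) matroid polytopes, whose vertices are integral and whose edges are parallel to $\bm e_i$ or to $\bm e_i-\bm e_j$. I would therefore run pipage rounding restricted to such edges: moving within the current linearity cell to a random, mean-preserving endpoint keeps $\E[\widehat h]$ unchanged, while---since these are coordinate or coordinate-difference moves and the multilinear extension $G$ is linear along $\bm e_i$ and convex along $\bm e_i-\bm e_j$---it never decreases $\E[G]$; as the cell vertices are integral, iterating reaches an integral $X$ after polynomially many moves. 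Putting the stages together, $\E[f(X)]=\E[g(X)]+\E[h(X)]\ge G(\bm x(1))+\widehat h(\bm x(1))-\epsilon f(O)\ge(1-1/e)g(O)+\theta-\epsilon f(O)\ge(1-1/e)g(O)+h(O)-2\epsilon f(O)$, which is the target after rescaling $\epsilon$. The two steps requiring the most care are the polyhedral claims about $\widehat h$ (that the constrained direction subproblem collapses to \M-concave maximization, and that its linearity cells are matroid polytopes with integral vertices), and the verification that the restricted pipage moves can always be performed until every coordinate is integral.
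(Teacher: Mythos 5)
Your proposal is correct and follows essentially the same route as the paper: run continuous greedy on the multilinear extension of $g$ while constraining the concave closure of $h$ to stay at least $h(O)$ (so Jensen's inequality preserves the $h$-part), and then round inside the linearity face of the concave closure, which is a (generalized) matroid polytope, so that swap/pipage moves along $\bm e_i - \bm e_j$ preserve $\E[\bar h]$ exactly and do not decrease $\E[G]$. The only differences are implementation-level: the paper solves the direction subproblem as an LP feasibility problem via the ellipsoid method with a separation oracle for $\bar h(\bv)\ge h(O)$ (guessing both $g(O)$ and $h(O)$) rather than by Lagrangian duality, and it realizes your cell-restricted pipage step by explicitly computing a convex combination on the face via Cunningham's algorithm and applying the \M-exchange property of $h$.
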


\paragraph{Applications}
\M-concave functions include many nontrivial functions such as (weighted) matroid rank functions and laminar concave functions, which are far from modular functions.
Our result immediately implies that if the significant part of $f = g + h$ is due to the \M-concave part $h$, we can obtain a better approximation ratio, although it might not be straightforward to find such a decomposition of $f$ into $g$ and $h$ even if $f$ is a compactly represented function.
In Section~\ref{sec:decomposition}, we provide a general algorithm for finding a decomposition from the value oracle of $f$.
This algorithm always finds a decomposition that is at least as good as the decomposition via the standard curvature.
In Section~\ref{sec:applications}, we provide problem-specific decompositions for several problems such as facility location, which yield improved approximation guarantees.

\subsection{Proof technique}
Our main result (Theorem~\ref{thm:main}) is proved by modifying the \emph{continuous greedy algorithm}.
The continuous greedy algorithm is a powerful and flexible framework for submodular function maximization, which has been applied to a matroid constraint, a knapsack constraint, and even combinations of various constraints~\cite{Calinescu2011,Chekuri2010,Vondrak2014}.
At a high level, the continuous greedy algorithm generates a sequence $\bx(t)$ in the convex hull of feasible solutions for $t \in [0,1]$ by the following differential equation: $\frac{d\bx}{dt} = \bv(t)$ and $\bx(0) = \bfzero$, where $\bv$ is a prescribed velocity vector.
Then, we round the final point $\bx(1)$ into a feasible solution $X$ via a rounding algorithm.
The previous result of \cite{Sviridenko2015} relies on the property that for a modular function $h = \bw(X)$, if $\bw^\top\bv(t) \geq \alpha$ holds for $t \in [0,1]$, then $\bw^\top \bx(1) \geq \alpha$.
This property enables us to find a fractional solution $\bx(1)$ that simultaneously optimizes $h$ and $g$;
that is, $\bx(1)$ achieves the optimal value for the modular part $h$ and a $(1-1/e)$-approximation for the remaining monotone submodular part $g$.

\paragraph{Continuous greedy algorithm with a multilinear extension and concave closure}
To run the continuous greedy algorithm, we require continuous extensions of set functions.
For submodular functions, the \emph{multilinear extension} is typically used.
In addition, a modular function is trivially extendable.
However, for \M-concave functions, it is nontrivial to choose a continuous extension because if we just use a continuous extension, we end up with a $(1-1/e)$-approximation.
To remedy this, we use a different continuous extension, namely, the \emph{concave closure}.
The concave closure is difficult to evaluate in general, but for \M-concave functions, we can evaluate their concave closures in polynomial time thanks to the results of discrete convex analysis.
This property enables us to run a modified continuous greedy algorithm with the following strong guarantee;
that is, our continuous greedy algorithm finds a fractional solution that achieves the optimal value for the \M-concave part and a $(1-1/e)$-approximation for the remaining monotone submodular part.

\paragraph{Rounding algorithm preserving the values of a multilinear extension and concave closure}
A difficulty also arises in the rounding phase.
Typically, a rounding algorithm finds a feasible subset that preserves the value of the multilinear extension.
However, since our modified continuous greedy algorithm also involves a concave closure, we need to design a rounding algorithm that preserves the values of the multilinear extension and concave closure.
To this end, we extend the \emph{swap rounding}~\cite{Chekuri2010} algorithm for \M-concave functions.
The original swap rounding algorithm is designed for rounding a fractional solution in a matroid base polytope into a matroid base without losing the value of a multilinear extension.
We prove that almost the same strategy works for our purpose by exploiting the combinatorial structures of \M-concave functions.

\subsection{Related work}
The concept of curvature was introduced by Conforti and Cornu{\'e}jols, and they proved that the classical greedy algorithm of \cite{Nemhauser:1978dm} achieves a $(1-e^{-c})/c$-approximation for a cardinality constraint.
In \cite{Vondrak2010a}, Vondr\'{a}k introduced a slightly relaxed variant of the curvature, namely, the \emph{curvature with respect to the optimum}.
He obtained the same approximation guarantee $(1-e^{-c^*})/c^*$ for a matroid constraint, where $c^*$ is the curvature with respect to the optimum, and also showed that the approximation guarantee is tight for general submodular functions.
Later, this result was refined with the total curvature by Sviridenko, Vondr\'{a}k, and Ward~\cite{Sviridenko2015}.
For a knapsack constraint, Yoshida~\cite{Yoshida2016} proved a better approximation ratio of $1-c/e$.
Iyer and Bilmes~\cite{Iyer2013} studied a general reduction between submodular function maximization and a \emph{submodular cover}, and they proved that if the curvature of the functions involved is small, the two problems reduce to each other.

Discrete convex analysis originated with Murota~\cite{Murota2003}.
We note that discrete convex functions are usually defined on the integer lattice $\bbZ^E$, although we focus on set functions in this paper.
Discrete convex functions admit various attractive properties such as Fenchel duality, the separation theorem, and descent-type algorithms, which are analogous to convex analysis in the Euclidean space.
The applications of discrete convex analysis include economics, system analysis for electrical circuits, phylogenetic analysis, etc.
Shioura~\cite{Shioura2009} studied maximization of the sum of \M-concave set functions subject to a matroid constraint.
He showed that pipage rounding~\cite{Calinescu2011} can be explained from the viewpoint of discrete convexity.
For details, the reader is referred to a monograph~\cite{Murota2003} or recent survey~\cite{Murota2016}.

Recently, Chatziafratis, Roughgarden, and Vondr\'{a}k~\cite{Chatziafratis2017} provided improved approximation guarantees via a different concept called \emph{perturbation stability}.
They proved that the greedy and local search algorithms achieve a better approximation for submodular function maximization for various set systems under a stability assumption.

\subsection{Organization}
In Section~\ref{sec:pre}, we introduce the notation and basic concepts for submodular function maximization and discrete convex analysis.
Our main theorem and continuous greedy algorithms are presented in Section~\ref{sec:alg}.
We describe a general algorithm for finding a decomposition of a given monotone submodular function into a monotone submodular function and an \M-concave function in Section~\ref{sec:decomposition}.
In Section~\ref{sec:applications}, we provide several examples of problem-specific decompositions and the theoretical bounds of the $h$-curvature.



\section{Preliminaries}\label{sec:pre}

For a set $S \subseteq E$, $\bfone_S$ denotes the characteristic vector of $S$; that is, $\bfone_S(i) = 1$ if $i \in S$, and $\bfone_S(i) = 0$ otherwise.
The dimension of the ambient space should be clear from the context.
For $a \in \bbR$, we define $[a]^+ := \max\set{a, 0}$.
For a vector $\bx \in \bbR^E$ and $X \subseteq E$, we use the shorthand notation: $\bx(X) := \sum_{i \in X}\bx(i)$.

A pair consisting of a finite set $E$ and a set family of $\caI \subseteq 2^E$ is called a \emph{matroid} if (i) $\emptyset \in \caI$; (ii) if $X \in \caI$, then any subset of $X$ also belongs to $\caI$; and (iii) if $X, Y \in \caI$ and $|X| < |Y|$, then there exists $i \in Y \setminus X$ such that $X + i \in \caI$.
A member of $\caI$ is called an \emph{independent set}.
A maximal independent set is called a \emph{base}.
The \emph{base polytope} of a matroid is the convex hull of all characteristic vectors of its bases.
The \emph{rank function} of a matroid is the following set function: $r(X) = \max\{|I| : I \subseteq X, I \in \caI\}$.
Although it is well-known that matroid rank functions are monotone and submodular, they are indeed included in the more tractable class called \M-concave functions.

\begin{definition}[\M-concave function~\cite{Murota2000}]
    A set function $f: 2^E \to \bbR$ is \emph{\M-concave} if for $X, Y \subseteq E$ and $i \in X \setminus Y$, either
    \begin{align}\label{eq:mconcave-incr}
        f(X) + f(Y) \leq f(X - i) + f(Y + i),
    \end{align}
    or there exists $j \in Y \setminus X$ such that
    \begin{align}\label{eq:mconcave-exc}
        f(X) + f(Y) \leq f(X - i + j) + f(Y - i + j).
    \end{align}
\end{definition}

It is known that \M-concave functions are submodular.
Note that \M-concave functions are not necessarily monotone.
We give several examples of \M-concave functions below.

\begin{example}[\M-concave functions from univariate concave functions]
    Let $\phi: \bbR_+ \to \bbR$ be a concave function.
    Then, $f(X) := \phi(|X|)$ is an \M-concave function.
    For example, $X \mapsto \sqrt{|X|}$ and  $X \mapsto \min\{|X|, \alpha\}$ ($\alpha > 0$) are \M-concave.
    More generally, let $\caL$ be a laminar family on $E$ and $\phi_L : \bbR_+ \to \bbR$ be a concave function for $L \in \caL$.
    Then, $f(X) := \sum_{L \in \caL} \phi_L(|X \cap L|)$ is an \M-concave function.
    Here, as a special case, a separable concave function $f(X) = \sum_{i \in E} \phi_i(|X \cap \{i\}|)$ is \M-concave.
\end{example}

\begin{example}[\M-concave functions from matroids]\label{ex:weighted-rank}
    Let $\bM = (E, \caI)$ be a matroid and $\bw \in \bbR_+^E$.
    The \emph{weighted matroid rank function} is the following set function:
    \begin{align}
        f(X) = \max\{\bw(I) : I \subseteq X, I \in \caI\}.
    \end{align}
    That is, $f(X)$ is the maximum weight of independent sets in $X$.
    In particular, the matroid rank function is \M-concave.
\end{example}

If we restrict the domain of an \M-concave function to all subsets of the same cardinality, the condition in \eqref{eq:mconcave-incr} can be omitted.
Such functions are called \emph{M-concave functions}.

For exploiting continuous greedy algorithms, we require continuous extensions of set functions.
\begin{definition}[Multilinear extension]
  The \emph{multilinear extension} $F:[0,1]^E \to \bbR$ of a set function $f:2^E \to \bbR$ is defined as
  \[
    F(\bx) = \sum_{X \subseteq E}f(X) \prod_{i \in X}\bx(i) \prod_{i \in E \setminus X}(1-\bx(i)).
  \]
\end{definition}

\begin{definition}[Concave closure]
    The \emph{concave closure} $\bar f:[0,1]^E \to \bbR$ of a set function $f : 2^E \to \bbR$ is defined as
    \begin{align}
        \bar f (\bx) := \max \left\{ \sum_{Y \subseteq E} \lambda_Y f(Y) : \sum_{Y} \lambda_Y \bfone_Y = \bx, \sum_Y \lambda_Y = 1, \lambda_Y \geq 0 \, (Y \subseteq E) \right\}.
    \end{align}
\end{definition}

If $f$ is an \M-concave function, then $\bar f(\bx)$ can be computed in polynomial time for any $\bx \in [0,1]^E$~\cite{Shioura2015}.
Furthermore, the subgradients of $\bar f$ can be computed in polynomial time~\cite{Shioura2015}. 
This yields a separation oracle for the constraint of the form $\bar f(\bx) \geq \alpha$.


\section{Algorithms}\label{sec:alg}

In this section, we prove Theorem~\ref{thm:main}.
To this end, we show the following, which simultaneously optimizes $g$ and $h$:
\begin{theorem}\label{thm:simultaneous-approximation}
    Let $f:2^E \to \bbR_+$ be a monotone submodular function decomposed as $f=g+h$ for a monotone submodular function $g:2^E \to \bbR_+$ and an \M-concave function $h:2^E\to \bbR$, and let $k$ be a positive integer.
    Then, there exists a polynomial-time algorithm that finds a random set $X \subseteq E$ of cardinality $k$ such that
    \begin{align*}
        \E[g(X)] &\geq \Bigl(1-\frac{1}{e}\Bigr) g(O) - \epsilon M, \\
        \E[h(X)] &\geq h(O) - \epsilon M,
    \end{align*}
    for any constant $\epsilon > 0$, where $O$ is an optimal solution for $\max\{ f(X) : |X| \leq k\}$ and $M = \max_{e \in E}\max\set{g(e),h(e)}$.
\end{theorem}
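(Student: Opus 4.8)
The plan is to run a continuous greedy algorithm on $[0,1]^E$ that tracks two extensions simultaneously: the multilinear extension $G$ of the "difficult" part $g$, and the concave closure $\bar h$ of the "easy" \M-concave part $h$. Concretely, I would evolve a fractional point $\bx(t)$ for $t\in[0,1]$ according to $\frac{d\bx}{dt}=\bv(t)$ with $\bx(0)=\bfzero$, where at each time $t$ the velocity $\bv(t)$ is the characteristic vector $\bfone_S$ of a set $S$ of size $k$ chosen to (approximately) maximize a combined linear objective of the form $\nabla G(\bx(t))\cdot \bfone_S + (\text{subgradient of }\bar h\text{ at }\bx(t))\cdot \bfone_S$, i.e., the best "direction" toward a base of the cardinality-$k$ uniform matroid with respect to the current gradients. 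Discretizing time into $1/\delta$ steps of length $\delta$ (with $\delta=\mathrm{poly}(\epsilon/n)$, or equivalently using $\mathrm{poly}(n)$ samples to estimate $\nabla G$) keeps everything polynomial-time; the final point $\bx(1)$ lies in the base polytope of the uniform matroid scaled appropriately, so it is a convex combination of size-$k$ sets.

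For the analysis I would establish two separate lower bounds along the trajectory. For the multilinear part, the standard continuous-greedy argument gives $\frac{d}{dt}G(\bx(t)) \ge g(O) - G(\bx(t))$ up to an $O(\delta)$ error from discretization and the multilinearity/monotonicity of $g$, since $\bfone_O$ is itself a feasible direction and $G$ is concave along nonnegative directions; integrating the differential inequality yields $G(\bx(1)) \ge (1-1/e)g(O) - \epsilon M$. For the \M-concave part, the key is that the concave closure is genuinely concave (not just concave along coordinate directions), so $\bar h(\bx(1)) - \bar h(\bx(0)) \ge \int_0^1 (\text{subgradient at }\bx(t))\cdot \bv(t)\,dt$, and since at each step we chose $\bv(t)$ to do at least as well as $\bfone_O$ on the combined objective — and in particular on the $\bar h$-component, because the greedy choice is optimal for the sum and the $g$-gradient is nonnegative so adding $O$'s contribution only helps — we get subgradient$\cdot\bfone_O \ge \bar h(\bx(1)) \ge \bar h(\bfone_O) = h(O)$ by the subgradient inequality for concave functions, again up to $\epsilon M$. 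Here I am using the earlier-quoted fact that $\bar h$ and its subgradients are computable in polynomial time for \M-concave $h$.

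The last ingredient is rounding $\bx(1)$ to an actual size-$k$ set $X$ without losing either guarantee. I would invoke an extension of swap rounding to \M-concave functions: writing $\bx(1)$ as a convex combination of size-$k$ sets (bases of the uniform matroid), repeatedly merge two of them by a single swap, which preserves feasibility and, crucially, does not decrease the expected value of the multilinear extension of any submodular function (so $\E[g(X)]\ge G(\bx(1))$) nor the expected value of $h$ relative to $\bar h(\bx(1))$ — the latter uses the exchange axiom \eqref{eq:mconcave-exc} to argue that a swap step does not hurt $h$ in expectation, mirroring the classical submodular swap-rounding inequality. Combining, $\E[g(X)]\ge(1-1/e)g(O)-\epsilon M$ and $\E[h(X)]\ge h(O)-\epsilon M$, which is the claim.

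The main obstacle I anticipate is the rounding step: establishing that swap rounding preserves the concave-closure value for \M-concave functions is not a routine invocation of the known submodular swap-rounding lemma, since that lemma concerns the multilinear extension, whereas here we carried $h$ via its concave closure $\bar h$. I would need to show (i) that the fractional point actually produced by the continuous greedy algorithm, when restricted to the $h$-coordinates, is realized by a convex combination whose $h$-value equals $\bar h(\bx(1))$ (this is where computability of $\bar h$'s optimal representation matters), and (ii) that each elementary swap between two sets in that representation does not decrease $\E[h(\cdot)]$ — which should follow from \eqref{eq:mconcave-exc}, but the bookkeeping (choosing which element $j$ to swap in, and handling the "increase" case \eqref{eq:mconcave-incr} when cardinalities differ — here they do not, since all sets have size $k$, so only \eqref{eq:mconcave-exc} is relevant) needs care. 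A secondary, more routine obstacle is controlling all the discretization/estimation errors uniformly by $\epsilon M$, which follows from standard Lipschitz bounds on $G$, $\nabla G$, and $\bar h$ in terms of $M$.
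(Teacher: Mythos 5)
Your rounding phase is essentially the paper's: write $\bx(1)$ as a convex combination of size-$k$ sets attaining $\bar h(\bx(1))$, then repeatedly merge two of them by a single swap, using the exchange axiom \eqref{eq:mconcave-exc} to show that $\E[h(\cdot)]$ does not decrease and the known swap-rounding lemma for the multilinear extension of $g$. (The paper resolves the obstacle you correctly flag --- actually computing that convex combination --- by observing that the face of the epigraph of $\bar h$ containing $(\bx,\bar h(\bx))$ projects to an integral g-polymatroid, whose restriction to the cardinality-$k$ slice is a matroid base polytope, so Cunningham's algorithm applies with a rank oracle obtained by linear optimization over that polytope.)

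The genuine gap is in your continuous greedy phase. You choose $\bv(t)$ to maximize the \emph{combined} linear objective $\nabla G(\bx(t))^\top\bv + \bp(t)^\top\bv$, where $\bp(t)$ is a supergradient of $\bar h$ at $\bx(t)$. Optimality for a sum of two objectives gives no guarantee on either component separately: the maximizer may beat $\bfone_O$ on the $G$-component while doing strictly worse on the $\bar h$-component, and the nonnegativity of $\nabla G$ does not repair this (``adding $O$'s contribution only helps'' is not a valid inference about the $\bar h$-component alone). At best such an analysis bounds the sum $G(\bx(1))+\bar h(\bx(1))$ by something like $(1-1/e)(g(O)+h(O))$, which is strictly weaker than the theorem's simultaneous guarantee of $(1-1/e)\,g(O)$ for $g$ \emph{and the full} $h(O)$ for $h$ --- and that stronger, asymmetric guarantee is exactly what Theorem~\ref{thm:main} needs. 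The paper avoids this by first guessing $g(O)$ and $h(O)$ (trying polynomially many candidate values) and then, at each time step, solving the \emph{feasibility} problem of finding $\bv(t)\in P_k$ with $\bv(t)^\top\nabla G(\bx)\ge g(O)$ and $\bar h(\bv(t))\ge h(O)$; this is feasible because $\bfone_O$ satisfies both constraints, and it is solvable by the ellipsoid method via the separation oracle for $\bar h(\bv)\ge h(O)$. The $\bar h$-guarantee then follows not from a supergradient computation but from Jensen's inequality applied to the concave function $\bar h$, namely $\bar h(\bx(1))=\bar h\bigl(\int_0^1\bv(t)\,dt\bigr)\ge\int_0^1\bar h(\bv(t))\,dt\ge h(O)$. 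You would need to replace your velocity-selection rule and its analysis with something of this form for the proof to go through.
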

We note that the special case in which $h$ is modular is proved in~\cite{Sviridenko2015}.

Before proving Theorem~\ref{thm:simultaneous-approximation}, we see that Theorem~\ref{thm:main} is an easy consequence of Theorem~\ref{thm:simultaneous-approximation}:
\begin{proof}[Proof of Theorem~\ref{thm:main}]
    By Theorem~\ref{thm:simultaneous-approximation}, we can compute a random $X \subseteq E$ of size $k$ such that
    \begin{align*}
        \E[f(X)]
        &\geq \Bigl(1- \frac{1}{e}\Bigr)g(O) + h(O)-2\epsilon M
        = \Bigl(1-\frac{1}{e}\Bigr)f(O) + \frac{1}{e} h(O)-2\epsilon M \\
        &\geq \Bigl(1-\frac{1}{e}\Bigr) f(O) + \frac{1-\gamma_h}{e} f(O)-2\epsilon M
        = \Bigl(1-\frac{\gamma_h}{e}\Bigr)f(O)-2\epsilon M\\
        &\geq \Bigl(1-\frac{\gamma_h}{e}-O(\epsilon)\Bigr)f(O),
    \end{align*}
    where in the last inequality, we used the fact that $M = \max_{e \in E}\max\set{g(e),h(e)}\leq g(O)+h(O)$.
\end{proof}

In the rest of this section, we fix $f$ and its decomposition $f=g+h$, and the goal is a proof of Theorem~\ref{thm:simultaneous-approximation}.


\subsection{Continuous-time algorithm for simultaneous optimization}
First, we present a continuous-time version of our algorithm.
Although we cannot run this version in polynomial time, it is helpful to grasp the overall idea.

Our algorithm is based on the continuous greedy framework~\cite{Calinescu2011}.
Let $G:[0,1]^E \to \bbR_+$ be the multilinear extension of $g$ and $\bar h:[0,1]^E \to \bbR_+$ be the concave closure of $h$, and let $P_k$ be the convex hull of all characteristic vectors of subsets of size at most $k$.
We assume that we have value oracles of $G$, $\nabla G$, and $\bar{h}$.
We also assume that we know $g(O)$ and $h(O)$, where $O$ is the optimal solution to $\max\set{f(X):|X| \leq k}$.
We discuss how we can remove these assumptions in Section~\ref{subsec:discrete}.
The pseudocode of the continuous greedy algorithm is given in Algorithm~\ref{alg:cont-greedy}.

\begin{algorithm}[t]
    \begin{algorithmic}[1]
        \caption{Modified Continuous Greedy Algorithm}\label{alg:cont-greedy}
        \REQUIRE{$g(O),h(O) \in \bbR_+$, value oracles of $G,\bar{h}:[0,1]^E \to \bbR_+$, and a value oracle of $\nabla G:[0,1]^E \to \bbR_+^E$.}
        \STATE Set $\bx(0) := \bfzero \in \bbR^E$
        \STATE From $t=0$ to $t=1$, compute $\bx(t) \in [0,1]^E$ following the differential equation $\frac{d\bx}{dt} = \bv(t)$,
        where $\bv \in [0,1]^E$ is a vector satisfying
        \[
            \bv(t)^\top \nabla G(\bx) \geq g(O),\quad
            \bar{h}(\bv(t)) \geq h(O), \quad \text{ and } \quad
            \bv(t) \in P_k.
        \]
        \RETURN $\bx(1)$.
    \end{algorithmic}
\end{algorithm}

\begin{remark}
    One can find $\bv$ in Algorithm~\ref{alg:cont-greedy} in polynomial time.
    First, the problem of finding $\bv$ is a feasible problem of an LP.
    This problem is feasible because $\bfone_O \in \bbR^E$ is a feasible solution.
    Moreover, we have a separation oracle for $\bar{h}(\bv) \geq h(O)$.
    Hence, we can find a feasible solution using the ellipsoid method.
\end{remark}

The following lemma shows that Algorithm~\ref{alg:cont-greedy} provides the guarantee required in Theorem~\ref{thm:simultaneous-approximation}:
\begin{lemma}\label{lem:continuous}
    Let $\bx(t)$ be the sequence computed in Algorithm~\ref{alg:cont-greedy}.
    Then, we have $G(\bx(1)) \geq (1-1/e)g(O)$ and $\bar h(\bx(1)) \geq h(O)$.
\end{lemma}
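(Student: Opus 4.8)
The plan is to track the two quantities $G(\bx(t))$ and $\bar h(\bx(t))$ separately along the trajectory, establishing a differential inequality for each.

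For the multilinear part, I would proceed exactly as in the classical continuous greedy analysis. Writing $\bx = \bx(t)$, by the chain rule we have
\[
\frac{d}{dt}G(\bx(t)) = \bv(t)^\top \nabla G(\bx(t)).
\]
The defining property of $\bv(t)$ gives $\bv(t)^\top \nabla G(\bx(t)) \ge g(O)$, but this alone is not enough; the standard trick is to compare against a better lower bound. One shows that for any $\bx \in [0,1]^E$ and any feasible point $\bfone_O \in P_k$, concavity of $G$ along nonnegative directions (equivalently, submodularity of $g$) yields $\bv(t)^\top \nabla G(\bx) \ge G(\bx \vee \bfone_O) - G(\bx) \ge g(O) - G(\bx)$, where we also use that $\bv(t)$ is the LP optimum, so it does at least as well as $\bfone_O$ in the direction $\nabla G(\bx)$, and monotonicity of $g$. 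Hence $\frac{d}{dt}G(\bx(t)) \ge g(O) - G(\bx(t))$, and Grönwall's inequality with $G(\bx(0)) = G(\bfzero) \ge 0$ gives $G(\bx(1)) \ge (1 - e^{-1})g(O)$.

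For the concave-closure part, the key observation is that $\bar h$ is concave and positively homogeneous of degree one on the relevant cone (it is the support-like concave closure), and $\bx(1) = \int_0^1 \bv(t)\,dt$. By concavity of $\bar h$ combined with Jensen's inequality applied to the uniform average of the $\bv(t)$ over $t \in [0,1]$, we get
\[
\bar h(\bx(1)) = \bar h\!\left(\int_0^1 \bv(t)\,dt\right) \ge \int_0^1 \bar h(\bv(t))\,dt \ge \int_0^1 h(O)\,dt = h(O),
\]
using $\bar h(\bv(t)) \ge h(O)$ for every $t$. (Alternatively, one tracks $\frac{d}{dt}\bar h(\bx(t))$ using a subgradient $\bp(t)$ of $\bar h$ at $\bx(t)$: $\frac{d}{dt}\bar h(\bx(t)) \ge \bp(t)^\top \bv(t)$, and homogeneity plus $\bar h(\bv(t)) \ge h(O)$ forces $\bp(t)^\top \bv(t) \ge \bar h(\bv(t)) \ge h(O)$, then integrate from $\bar h(\bx(0)) = \bar h(\bfzero) = h(\emptyset) \ge 0$.)

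The step I expect to require the most care is justifying the inequality $\bar h(\bx(1)) \ge h(O)$ rigorously: one must confirm that $\bar h$ really is positively homogeneous (or at least superadditive in the scaled sense needed for the Jensen/integration argument) on the segment from $\bfzero$, and that the differential-equation solution $\bx(t)$ stays in the domain where these properties hold and where $\bar h$ is finite. The multilinear half is entirely standard. I would therefore devote the bulk of the write-up to the concave-closure bookkeeping, invoking the homogeneity of the concave closure and the fact that $\bx(t)$ is a convex combination (scaled by $t$) of points on which $\bar h$ is bounded below by $h(O)$.
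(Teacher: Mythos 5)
Your proof is correct and takes essentially the same route as the paper: the $G$ half is the standard continuous greedy analysis (which the paper simply cites), and the $\bar h$ half is exactly the paper's one-line Jensen argument $\bar h\bigl(\int_0^1 \bv(t)\,dt\bigr) \ge \int_0^1 \bar h(\bv(t))\,dt \ge h(O)$. Your worry about positive homogeneity is unnecessary: $\int_0^1 \bv(t)\,dt$ is the expectation of $\bv(T)$ for $T$ uniform on $[0,1]$, each $\bv(t)$ lies in $[0,1]^E$ where the concave closure is finite, so plain concavity of $\bar h$ already yields the Jensen step.
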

\begin{proof}
    By adopting the standard analysis of the continuous greedy algorithm, we can see that $G(\bx(1)) \geq (1-1/e)g(O)$ (see, e.g.,~\cite{Calinescu2011}).
    On the other hand, since $\bar h$ is a concave function,
    \begin{align*}
        \bar{h}(\bx(1)) = \bar{h}\left(\int_0^1 \bv(t) dt \right) \geq \int_0^1 \bar{h}(\bv(t))dt \geq h(O),
    \end{align*}
    where the first inequality follows from Jensen's inequality.
\end{proof}

\subsection{Discrete-time algorithm for simultaneous optimization}\label{subsec:discrete}
Now, we describe the discrete-time version of Algorithm~\ref{alg:cont-greedy}, which can be run in polynomial time.
Again, we basically follow the argument in~\cite{Sviridenko2015} and explain the modifications to  Algorithm~\ref{alg:cont-greedy}.
We define $M= \max_{e \in E}\max\set{g(e),h(e)}$ as the maximum marginal gain of adding a single element.

\paragraph{Guessing $g(O)$ and $h(O)$:}
As we do not know the values of $g(O)$ and $h(O)$, we need to guess these values.
This is treated in exactly the same way as in~\cite{Sviridenko2015}.
First, we discuss how we guess the value of $g(O)$.
We note that $g(O)\leq  nM$ since $g$ is submodular.
Then, we divide the range $[0,nM]$ into polynomially many intervals, try the minimum value of each interval, and return the best solution found.
We now try all values of the form of either (i) $\alpha = i\epsilon M $ for $i = 0, 1, \dots, \lfloor 1/\epsilon \rfloor$ or (ii) $(1 + \epsilon/n)^i M$ for $i = 0, 1, \dots, \log_{1+\epsilon/n}n$.
If $g(O) < M$, then we have $g(O) - \epsilon M \leq \alpha \leq g(O)$ for some $\alpha$ considered in (i).
On the other hand, if $g(O) \geq M$, then we have $(1-\epsilon)g(O) \leq \alpha \leq g(O)$ for some $\alpha$ considered in (ii).
We can guess the value of $h(O)$ similarly.

The above exhaustive search on $g(O)$ and $h(O)$ increases the running time by a factor of $O((n \log n/\epsilon)^2)$, and we can assume that we know $\alpha$ and $\beta$ such that $(1-\epsilon)g(O) -\epsilon M \leq \alpha \leq g(O)$ and $(1-\epsilon)h(O) -\epsilon M \leq \beta \leq h(O)$.

\paragraph{Finding an appropriate direction $\bv$:}
Another technical issue is that we cannot compute the exact value of $\nabla G(\bx(t))$.
It is shown in~\cite{Calinescu2011} that for any constant $\epsilon > 0$, we can compute an approximation $\widetilde{\nabla G}(\bx)$ to $\nabla G(\bx)$ via random sampling such that $|\bv^\top \widetilde{\nabla G} - \bv^\top \nabla G| \leq \epsilon M$. 

\paragraph{Discretizing the time step:}
Lastly, we discretize the time step.
Let $\delta = \epsilon/n^2$, and for simplicity, we assume that $1/\delta$ is an integer.
In each time step $t$, we compute $\bv(t)$ as described above, and we set $\bx(t+\delta) \gets \bx(t) + \delta\bx(t)$.
For the $G$ part, following the argument in Lemma 3.3 of~\cite{Calinescu2011}, we have
\begin{align*}
    & G(\bx(t + \delta)) - G(\bx(t))
    \geq (1-n\delta)\delta\bv^\top \nabla G
    \geq (1-n\delta)\delta\bigl(g(O) - G(\bx(t)) - O(\epsilon M)\bigr) \\
    &\geq \delta \bigl(g(O) - G(\bx(t)) - O(\epsilon M) - (\epsilon/n)g(O)\bigr)
    =  \delta\bigl(g(O) - O(\epsilon M)\bigr).
\end{align*}
This yields $G(\bx(1)) \geq (1-1/e)g(O) - O(\epsilon M) \geq (1-1/e)g(O)-O(\epsilon M)$.

For the $\bar{h}$ part, by using the (discrete) Jensen inequality, we have
\begin{align*}
    \bar{h}(\bx(1)) = \bar{h}\left(\sum_{t = 0}^{1/\delta} \delta\bv(t) \right) \geq \sum_{t=0}^{1/\delta} \delta\bar{h}(\bv(t))  \geq h(O) - O(\epsilon M).
\end{align*}

Summarizing the discussion above, we have the following:
\begin{lemma}\label{lem:discrete}
  For any constant $\epsilon > 0$, there exists a polynomial-time algorithm that finds a vector $\bx \in P_k$ such that $G(\bx) \geq (1-1/e)g(O)-\epsilon M$ and $\bar h(\bx) \geq h(O) - \epsilon M$.
\end{lemma}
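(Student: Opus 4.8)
The plan is to prove Lemma~\ref{lem:discrete} by discretizing the continuous-time Algorithm~\ref{alg:cont-greedy} and assembling the ingredients prepared above. By the exhaustive search described in the paragraph on guessing, we may assume, at a polynomial-factor cost in the running time, that we know values $\alpha,\beta$ with $(1-\epsilon)g(O)-\epsilon M \le \alpha \le g(O)$ and $(1-\epsilon)h(O)-\epsilon M \le \beta \le h(O)$. Fixing such $\alpha,\beta$, I would run the explicit Euler scheme $\bx(0) := \bfzero$ and $\bx(t+\delta) := \bx(t) + \delta\,\bv(t)$ for $t = 0,\delta,\dots,1-\delta$ with $\delta = \epsilon/n^2$, where at each step $\bv(t) \in P_k$ is any solution of the discretized analogue of the feasibility system of Algorithm~\ref{alg:cont-greedy}, namely $\bv^\top\widetilde{\nabla G}(\bx(t)) \ge \alpha - G(\bx(t)) - \epsilon M$ and $\bar h(\bv) \ge \beta$, with $\widetilde{\nabla G}$ the sampled gradient estimate satisfying $|\bv^\top\widetilde{\nabla G} - \bv^\top\nabla G| \le \epsilon M$; the algorithm outputs $\bx(1)$.

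First I would check that the per-step system is solvable and solvable in polynomial time. Solvability holds because $\bfone_O \in P_k$ (as $|O| \le k$) is a certificate: monotonicity and submodularity of $g$ give the standard estimate $\bfone_O^\top\nabla G(\bx) \ge G(\bx \vee \bfone_O) - G(\bx) \ge g(O) - G(\bx) \ge \alpha - G(\bx)$, which remains true with $\widetilde{\nabla G}$ after absorbing the $\epsilon M$ sampling slack, while $\bar h(\bfone_O) = h(O) \ge \beta$ since $\bfone_O$ is a vertex of the unit cube. For polynomial-time solvability, note that the system is a linear feasibility problem whose constraints are all explicit except $\bar h(\bv) \ge \beta$, and the latter has a polynomial-time separation oracle for \M-concave $h$ by~\cite{Shioura2015}; hence the ellipsoid method finds $\bv(t)$ in polynomial time.

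Given feasible $\bv(t)$ throughout, the analysis separates into the two extensions. For $G$, Lemma~3.3 of~\cite{Calinescu2011} gives $G(\bx(t+\delta)) - G(\bx(t)) \ge (1-n\delta)\,\delta\,\bv(t)^\top\nabla G(\bx(t))$; plugging in the per-step inequality together with $\alpha \ge (1-\epsilon)g(O)-\epsilon M$ and using $\delta = \epsilon/n^2$, telescoping over the $1/\delta$ steps yields $G(\bx(1)) \ge (1-1/e)g(O) - O(\epsilon M)$ via the usual computation. For $\bar h$, the final point $\bx(1) = \sum_t \delta\,\bv(t)$ is a convex combination, so concavity of $\bar h$ and the discrete Jensen inequality give $\bar h(\bx(1)) \ge \sum_t \delta\,\bar h(\bv(t)) \ge \beta \ge (1-\epsilon)h(O)-\epsilon M \ge h(O) - O(\epsilon M)$. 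Also $\bx(1) \in P_k$ since $P_k$ is convex and every $\bv(t) \in P_k$. Rescaling $\epsilon$ by the implied constant factor turns the $O(\epsilon M)$ terms into $\epsilon M$, and the running time is polynomial: the guessing contributes a factor $O((n\log n/\epsilon)^2)$, and the Euler scheme runs $1/\delta = n^2/\epsilon$ steps, each solving one LP and performing one gradient-sampling step.

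I expect the main obstacle to be the bookkeeping that forces the several error sources — the gradient-sampling error accumulated over $n^2/\epsilon$ steps, the Euler-discretization loss captured by the $(1-n\delta)$ factor, and the slack between $\alpha,\beta$ and the true optima $g(O),h(O)$ — to combine into a single $O(\epsilon M)$ term, together with the verification that the under-estimate $\alpha$ still leaves the per-step system feasible. All of this mirrors~\cite{Sviridenko2015,Calinescu2011} in the case where $h$ is modular; the only genuinely new point is that the constraint involving the concave closure $\bar h$ must still be handled efficiently, which is exactly what the separation oracle of~\cite{Shioura2015} provides, so the task reduces to checking that nothing in the modular-$h$ argument exploited $h$ beyond the concavity of its continuous extension and the existence of that oracle.
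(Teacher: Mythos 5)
Your proposal follows essentially the same route as the paper's own argument: guess $g(O)$ and $h(O)$ by exhaustive search, run the Euler-discretized continuous greedy with step $\delta=\epsilon/n^2$ using a sampled gradient and the ellipsoid method with the separation oracle for $\bar h(\bv)\ge\beta$ from~\cite{Shioura2015}, bound the $G$ part via Lemma~3.3 of~\cite{Calinescu2011}, and bound the $\bar h$ part via discrete Jensen applied to $\bx(1)=\sum_t\delta\bv(t)$. The only difference is that you spell out the feasibility certificate $\bfone_O$ and the per-step constraint $\bv^\top\widetilde{\nabla G}\ge\alpha-G(\bx(t))-\epsilon M$ more explicitly than the paper does, which is correct and, if anything, fixes a small imprecision in how Algorithm~\ref{alg:cont-greedy} states its velocity condition.
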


\subsection{Rounding}

In this section, we show the following:
\begin{lemma}\label{lem:rounding}
  Let $\bx \in P_k$ be a vector.
  Then, there exists a polynomial-time randomized rounding algorithm that outputs $X \subseteq E$ with $|X| \leq k$ such that $\E[g(X)] \geq G(\bx)$ and $\E[h(X)] \geq \bar{h}(\bx)$.
\end{lemma}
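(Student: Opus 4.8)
The plan is to design a rounding procedure that is compatible with both the multilinear extension of $g$ and the concave closure of $h$ simultaneously. First I would recall that $\bx \in P_k$, so $\bx$ lies in the base polytope of the uniform matroid $U_{k,n}$ (after possibly padding to make $\bx(E) = k$ exactly, which only helps by monotonicity of $g$ and $h$; if $f$ or $h$ is not monotone we instead work inside $P_k$ directly and carry an extra $[\,\cdot\,]^+$). The key structural fact I would exploit is that the concave closure $\bar h$ admits an explicit representation: since $h$ is \M-concave, by discrete convex analysis (the results of Shioura cited above) $\bar h(\bx)$ is attained by a convex combination $\bx = \sum_Y \lambda_Y \bfone_Y$ supported on sets $Y$ that lie on a single ``face'' compatible with \M-concavity — concretely, one can take the $Y$'s to all have the same size and to differ by single-element swaps, so that the support forms a connected structure in the exchange graph. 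I would start the proof by fixing such an optimal representation for $\bar h(\bx)$.

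Next I would run a swap-rounding style argument. The idea of swap rounding for the uniform matroid base polytope is: write $\bx$ as a convex combination of bases $B_1, \dots, B_m$ with coefficients $\beta_1, \dots, \beta_m$; then repeatedly merge two bases $B_i, B_j$ into a single random base by a sequence of element swaps, each swap replacing an element of one base by an element of the other so as to move toward agreement, with the swap direction chosen randomly with probability proportional to the coefficients. The crucial property I would use is that each elementary swap step preserves the multilinear extension value of $g$ in expectation (this is the standard fact that for submodular $g$, swapping $i \to j$ between the two merging sets does not decrease $\E[g]$ — it follows from submodularity exactly as in Chekuri–Vondrák–Zenklusen), which inductively gives $\E[g(X)] \geq G(\bx)$. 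Simultaneously, and this is the new ingredient, I would show that the same elementary swap step does not decrease the expected value of $h$: because the sets appearing in the process all lie on the \M-concave-compatible face, any single swap $i \to j$ is an exchange of the type appearing in \eqref{eq:mconcave-exc}, so the \M-concavity inequality $h(X) + h(Y) \le h(X - i + j) + h(Y + i - j)$ (or its increase counterpart) guarantees that, averaging over the randomized swap direction with the right probabilities, $\E[h]$ is nondecreasing. Chaining over all $O(mk)$ swaps, the final singleton set $X$ satisfies $\E[h(X)] \geq \sum_Y \lambda_Y h(Y) = \bar h(\bx)$.

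The step I expect to be the main obstacle is ensuring that the swap-rounding process can be carried out so that it is \emph{simultaneously} valid for $g$ (which wants the swaps to be arbitrary base-exchange swaps in $U_{k,n}$) and for $h$ (which wants the swaps to stay within the \M-concave exchange structure, so that each swap is covered by an \M-concavity inequality rather than an arbitrary two-element move). Resolving this requires initializing the convex decomposition of $\bx$ from the \emph{same} family of sets that attains $\bar h(\bx)$ — i.e., not an arbitrary base decomposition but one coming from the \M-concave representation — and then checking that the merge-and-swap operations, when restricted to this family, always produce sets that remain in the family (closure of the \M-concave face under the relevant exchanges). I would verify this closure property using the local exchange axiom of \M-concave functions: given two sets $X, Y$ on the face and $i \in X \setminus Y$, either the increase move or a swap move stays on the face, and in the uniform-matroid (equi-cardinality) regime it is always a swap, matching exactly the base-exchange swaps needed for $g$. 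Once this compatibility is established, the two expectation bounds follow by the two inductions described above, and collecting them with $|X| = k$ (or $|X| \le k$) gives the lemma. Finally, I would note polynomial running time: the decomposition has polynomially many sets (by Carathéodory, at most $n+1$, and the \M-concave representation can be computed in polynomial time per the cited results), and each merge uses at most $k$ swaps, each evaluable with a constant number of oracle calls, so the whole procedure runs in polynomial time.
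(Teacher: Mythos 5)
Your proposal is correct and follows essentially the same route as the paper: start from a convex combination of size-$k$ sets attaining $\bar h(\bx)$, then swap-round by choosing the exchange partner $j$ via the \M-concave exchange axiom and randomizing the swap direction proportionally to the coefficients, so that the process is a martingale (giving $\E[g(X)] \geq G(\bx)$ by the Chekuri--Vondr\'ak--Zenklusen lemma) while each exchange inequality makes $\E[h]$ nondecreasing. Two minor remarks: the ``closure of the face'' worry is unnecessary since the exchange axiom applies to arbitrary pairs of sets at every step, and the one step you gloss over --- actually computing the optimal convex combination for $\bar h(\bx)$ --- is where the paper does real work, via the g-polymatroid structure of the face, restriction to the cardinality-$k$ slice, and Cunningham's algorithm with an ellipsoid-based rank oracle.
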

Theorem~\ref{thm:simultaneous-approximation} is obtained by combining Lemmas~\ref{lem:discrete} and~\ref{lem:rounding}.

First, we show a rounding method, provided that a convex combination of $\bar{h}(\bx)$ is known, in Section~\ref{subsubsec:rounding-cc-known}.
We will discuss how to compute a convex combination in Section~\ref{subsubsec:rounding-cc-compute}.

\subsubsection{Rounding when a convex combination of $\bar{h}$ is known}\label{subsubsec:rounding-cc-known}
Suppose that we have a convex combination of $\bar{h}(\bx)$; that is, we have $\lambda_1,\ldots,\lambda_m > 0$ and $X_1,\ldots,X_m \subseteq E$ of size $k$ with $\sum_i \lambda_i = 1$, $\sum_i \lambda_i \bfone_{X_i} = \bx$, and
\begin{align}\label{eq:convex-combination}
    \bar h(\bx) = \sum_{i=1}^m \lambda_i h(X_i).
\end{align}
Our rounding is a random process, and the condition $|X_i| = k\;(k \in [m])$ is always preserved throughout the process.
%

If $X_1= \cdots = X_m$, then we are done, and we output the set $X := X_1 (=\cdots = X_m)$, which is of size $k$.

Otherwise, find $1 \leq a < b \leq m$ such that $X_a \setminus X_b \neq \emptyset$ and $X_a \setminus X_b \neq \emptyset$.
Then, we fix $i \in X_a \setminus X_b$ and find $j \in X_b \setminus X_a$ \ such that
\begin{align*}
    h(X_a) + h(X_b) \leq h(X_a - i + j) + h(X_b + i - j),
\end{align*}
whose existence is guaranteed by the \M-concavity of $h$.
Then, we replace $X_a$ with $X_a - i + j$ with the probability $\lambda_b / (\lambda_a + \lambda_b)$, replace $X_b$ with $X_b + i - j$ with the complementary probability, and repeat this process again.

We now analyze the expected value of $h$ at the end of the rounding process.
\begin{lemma}\label{lem:rounding-h}
  Let $X \subseteq E$ be the output set.
  Then, we have $\E[h(X)] \geq \bar{h}(\bx)$.
\end{lemma}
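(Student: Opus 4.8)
The plan is to show that each swap step in the rounding process preserves, in expectation, a suitable convex combination whose value is at least $\bar h(\bx)$, and then argue that the process terminates after polynomially many steps. Concretely, I would track the random collection of pairs $(\lambda_i, X_i)_{i=1}^m$ during the process and consider the potential $\Phi := \sum_{i=1}^m \lambda_i h(X_i)$. Initially $\Phi = \bar h(\bx)$ by~\eqref{eq:convex-combination}. In a single step, where $X_a$ and $X_b$ are selected together with $i \in X_a \setminus X_b$ and $j \in X_b \setminus X_a$ satisfying $h(X_a) + h(X_b) \le h(X_a - i + j) + h(X_b + i - j)$, only the two terms indexed by $a$ and $b$ change, and the weights $\lambda_a, \lambda_b$ are unchanged. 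The expected new contribution of these two terms, conditioned on the current state, is
\[
  \frac{\lambda_b}{\lambda_a+\lambda_b}\Bigl(\lambda_a h(X_a - i + j) + \lambda_b h(X_b)\Bigr)
  + \frac{\lambda_a}{\lambda_a+\lambda_b}\Bigl(\lambda_a h(X_a) + \lambda_b h(X_b + i - j)\Bigr).
\]
Expanding this, the cross terms combine so that the expression equals
\[
  \frac{\lambda_a \lambda_b}{\lambda_a+\lambda_b}\Bigl(h(X_a - i + j) + h(X_b + i - j)\Bigr) + \frac{\lambda_a^2 h(X_a) + \lambda_b^2 h(X_b)}{\lambda_a+\lambda_b},
\]
and using the \M-concavity inequality on the first bracket one checks this is at least $\lambda_a h(X_a) + \lambda_b h(X_b)$, the old contribution. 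Hence $\E[\Phi_{\mathrm{new}} \mid \text{current state}] \ge \Phi_{\mathrm{old}}$, so $\Phi$ is a submartingale.

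Next I would verify termination. Each step strictly decreases a measure of ``spread'' among the sets — for instance $\sum_{i<j} |X_i \triangle X_j|$, or more carefully the quantity $\sum_i \lambda_i |X_i \triangle X_1|$ after fixing a reference, since the swap moves one of $X_a, X_b$ one element closer to agreeing with the other on the coordinates $i$ and $j$. A cleaner bookkeeping: after each swap the chosen element $i$ (which was in exactly one of $X_a, X_b$) ends up in the same number of sets or the pair $\{X_a,X_b\}$ agrees on one more coordinate; because there are $m$ sets, $n$ elements, and each is of size $k$, the total number of swaps is bounded by a polynomial in $m$ and $n$. Since the process halts only when $X_1 = \cdots = X_m$, and the output is $X := X_1$, at termination $\Phi = \bigl(\sum_i \lambda_i\bigr) h(X) = h(X)$. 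Taking expectations over the whole process and using the submartingale property together with optional stopping (the number of steps is bounded, so there is no integrability issue),
\[
  \E[h(X)] = \E[\Phi_{\mathrm{final}}] \ge \Phi_{\mathrm{initial}} = \bar h(\bx),
\]
which is the claim.

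The main obstacle I expect is the termination / polynomial-time argument rather than the expectation inequality: one must choose the right monovariant to show that repeatedly applying the swap to \emph{some} violating pair $(a,b)$ cannot cycle and in fact finishes quickly. The expectation step is a short algebraic identity combined with the defining \M-concavity inequality for $h$, and it does not even need monotonicity of $h$. A secondary subtlety is that the lemma as stated assumes a convex combination achieving $\bar h(\bx)$ is given; that the sets $X_i$ all have size exactly $k$ (so that $\bfone_O$-type solutions and the constraint $|X|\le k$ are consistent) is what guarantees the swap preserves cardinality, and this should be stated explicitly when invoking~\eqref{eq:convex-combination}. Computing such a convex combination is deferred to Section~\ref{subsubsec:rounding-cc-compute}, so here I would simply take it as a hypothesis.
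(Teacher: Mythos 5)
Your proof is correct and follows essentially the same route as the paper: the paper likewise computes the expected change of $\lambda_a h(X_a) + \lambda_b h(X_b)$ in one swap, observes it is $\frac{\lambda_a\lambda_b}{\lambda_a+\lambda_b}\bigl[h(X_a-i+j)-h(X_a)+h(X_b+i-j)-h(X_b)\bigr] \geq 0$ by \M-concavity, and concludes by induction on the iterations. Your additional termination discussion goes beyond what the paper states (it is inherited from the swap-rounding framework of~\cite{Chekuri2010}, where one merges a fixed pair until $|X_a \triangle X_b|$ reaches zero before moving on), but it does not change the substance of the expectation argument.
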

\begin{proof}
    Suppose that we have a chosen $i \in X_a \setminus X_b$ and $j \in X_b \setminus X_a$ in an iteration of the process.
    Then, the expected change in the value of $\lambda_a h(X_a) + \lambda_b h(X_b)$ at this iteration is
    \[
        \frac{\lambda_a\lambda_b}{\lambda_a + \lambda_b}[h(X_a - i + j) - h(X_a) + h(X_b + i -j) - h(X_b)] \geq 0.
    \]
    Therefore, the expected value of $\sum_{i=1}^m \lambda_i h(X_i)$ is at least $\bar{h}(\bx)$.
    The lemma holds by induction on the number of iterations.
\end{proof}

Next, we analyze the expected value of $g$ at the end of the rounding process.
Let $\bx^t = \sum_i \lambda_i \bfone_{X_i}$ be the random vector after the $t$th exchange ($t = 0, 1, \dots$).
One can check that
(i) $\bx^0 = \bx$,
(ii) $\bx^{t+1} - \bx^t$ has at most one positive coordinate and at most one negative coordinate, and
(iii) $\E[\bx^{t+1} \mid \bx^t] = \bx^t$.
The following lemma establishes that such a random process preserves the value of the multilinear extension of a monotone submodular function.

\begin{lemma}[{\cite[Lemma~VI.2]{Chekuri2010}}]\label{lem:rounding-g}
    Let $\bx^t$ be a vector-valued random process satisfying the above conditions (i), (ii), and (iii).
    Then, for the multilinear extension $F$ of any monotone submodular function $f:2^E \to \bbR$, we have $\E [F(\bx^t)] \geq F(\bx)$ for $t = 0, 1, \dots$.
\end{lemma}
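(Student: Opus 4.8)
The plan is to prove Lemma~\ref{lem:rounding-g} by a hybrid/martingale argument over the steps of the random process, showing that $\E[F(\bx^{t+1}) \mid \bx^t] \geq F(\bx^t)$ and then chaining via the tower property. Since $\bx^t$ is fixed when we condition, it suffices to analyze a single step: we have $\bx^{t+1} = \bx^t + \Delta$, where $\Delta$ is a random vector with at most one positive coordinate, at most one negative coordinate, and $\E[\Delta \mid \bx^t] = \bfzero$. Write the support of $\Delta$ as (at most) two coordinates $i$ (the one that may increase) and $j$ (the one that may decrease); all other coordinates of $\bx^{t+1}$ equal those of $\bx^t$, so only the dependence of $F$ on coordinates $i$ and $j$ matters.

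First I would exploit the multilinearity of $F$: fixing all coordinates other than $i$ and $j$ at their (deterministic, given $\bx^t$) values, the map $(\xi_i,\xi_j) \mapsto F(\ldots,\xi_i,\ldots,\xi_j,\ldots)$ is affine in each of $\xi_i$ and $\xi_j$ separately, hence of the form $A + B\xi_i + C\xi_j + D\xi_i\xi_j$ for constants $A,B,C,D$ depending on $\bx^t$. The key structural fact is that $D = \partial^2 F/\partial \xi_i \partial \xi_j$ equals $\E_{R}[f(R+i+j) - f(R+i) - f(R+j) + f(R)]$, where $R$ ranges over the appropriate random set on $E \setminus \{i,j\}$; by submodularity of $f$ this cross-difference is nonpointwise-positive, so $D \leq 0$. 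Now I would compute $\E[F(\bx^{t+1}) \mid \bx^t] - F(\bx^t)$. The linear terms $B\xi_i$ and $C\xi_j$ contribute zero in expectation because $\E[\Delta_i \mid \bx^t] = \E[\Delta_j \mid \bx^t] = 0$. The only surviving term is $D \cdot \E[\xi_i^{t+1}\xi_j^{t+1} - \xi_i^t \xi_j^t \mid \bx^t] = D \cdot \mathrm{Cov}(\xi_i^{t+1}, \xi_j^{t+1} \mid \bx^t)$. Here I use condition (ii): in a single step, at most one coordinate goes up and at most one goes down, so $i$ and $j$ never both increase — their increments are negatively correlated (or one is deterministic), giving $\mathrm{Cov}(\xi_i^{t+1},\xi_j^{t+1} \mid \bx^t) \leq 0$. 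Combined with $D \leq 0$, the product is $\geq 0$, so $\E[F(\bx^{t+1}) \mid \bx^t] \geq F(\bx^t)$.

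Finally I would close the induction: taking expectations and applying the one-step bound repeatedly (tower property), $\E[F(\bx^{t})] \geq \E[F(\bx^{t-1})] \geq \cdots \geq F(\bx^0) = F(\bx)$, using $\bx^0 = \bx$ from condition (i). One subtlety to handle cleanly is the degenerate cases where $\Delta$ has strictly fewer than two support coordinates (only one coordinate moves, or none) — these only make the covariance term more favorable (it becomes $0$), so the bound still holds; I would remark on this rather than belabor it. Monotonicity of $f$ is not actually needed for the inequality $\E[F(\bx^{t+1})\mid\bx^t]\geq F(\bx^t)$ itself — submodularity suffices — but I would keep the hypothesis as stated since it costs nothing and matches the cited source. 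The main obstacle is getting the sign bookkeeping exactly right: pinning down that the mixed second partial of $F$ is nonpositive under submodularity, and that condition (ii) forces the nonpositive covariance; once both signs are nailed down the rest is a short martingale chain.
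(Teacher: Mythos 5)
Your proof is correct. Note that the paper does not prove this lemma at all---it invokes it as a black box from \cite[Lemma~VI.2]{Chekuri2010}---and your argument is essentially the standard proof from that source: restrict the multilinear extension to the two moving coordinates, observe that the mixed second partial $\partial^2 F/\partial\xi_i\partial\xi_j$ is nonpositive by submodularity, note that condition (ii) forces $\Delta_i\Delta_j\leq 0$ pointwise (so the cross term has nonnegative expected contribution while the linear terms vanish by (iii)), and chain the one-step inequality by the tower property. Your side remarks are also accurate: monotonicity is not needed, and the degenerate steps where fewer than two coordinates move only help.
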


By Lemmas~\ref{lem:rounding-h} and~\ref{lem:rounding-g}, we have the following lemma.
\begin{lemma}\label{lem:rounding-cc-known}
    Let $X$ be the output of the above rounding process.
    Then, we have $\E [g(X)] \geq G(\bx)$ and $\E [h(X)] \geq \bar{h}(\bx)$.
\end{lemma}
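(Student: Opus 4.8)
The plan is to obtain Lemma~\ref{lem:rounding-cc-known} by combining the two facts already isolated for the swap process. The bound $\E[h(X)] \ge \bar h(\bx)$ is \emph{exactly} the content of Lemma~\ref{lem:rounding-h}, so the only thing left is $\E[g(X)] \ge G(\bx)$, which should follow from Lemma~\ref{lem:rounding-g} applied to the monotone submodular function $g$ and its multilinear extension $G$. To set this up, let $\bx^t = \sum_i \lambda_i \bfone_{X_i}$ be the fractional point after the $t$-th exchange, with the convention $\bx^{t+1} = \bx^t$ once all the $X_i$ coincide (so the process is defined for all $t$). First I would check the three hypotheses of Lemma~\ref{lem:rounding-g}: (i) $\bx^0 = \bx$ because $\sum_i \lambda_i \bfone_{X_i} = \bx$; (ii) a single exchange either sends $X_a$ to $X_a - i + j$ or sends $X_b$ to $X_b + i - j$, so $\bx^{t+1} - \bx^t$ moves mass $\lambda_a$ (resp.\ $\lambda_b$) between coordinates $i$ and $j$ --- exactly one coordinate increases and one decreases; (iii) $\E[\bx^{t+1}\mid\bx^t] = \bx^t$, which is the one-line computation at coordinates $i$ and $j$ that uses the replacement probability $\lambda_b/(\lambda_a+\lambda_b)$. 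Lemma~\ref{lem:rounding-g} then gives $\E[G(\bx^t)] \ge G(\bx)$ for every $t = 0, 1, \dots$.

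Next I would transfer this to the output. When the process stops, all $X_i$ equal $X$, and since $\sum_i \lambda_i = 1$ this forces $\bx^t = \bfone_X$ (and $|X| = k$, as every $X_i$ has size $k$), so $G(\bx^t) = g(X)$ from that point on. To see that the process stops almost surely I would use the potential $\Phi^t := k - \|\bx^t\|_2^2$: it is nonnegative by Jensen's inequality (each $\|\bfone_{X_i}\|_2^2 = k$) and vanishes precisely when all the $X_i$ agree, and a short variance computation shows that a non-terminal exchange on a pair $(a,b)$ gives $\E[\Phi^{t+1}\mid\bx^t] = \Phi^t - 2\lambda_a\lambda_b \le \Phi^t - 2(\min_i\lambda_i)^2$. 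Summing this bound shows the expected number of exchanges is at most $k/\bigl(2(\min_i\lambda_i)^2\bigr) < \infty$, so the termination time $T$ is almost surely finite; hence $\bx^t = \bfone_X$ for all $t \ge T$, and since $G$ is bounded on $[0,1]^E$, dominated convergence yields $\E[G(\bx^t)] \to \E[g(X)]$. Together with $\E[G(\bx^t)] \ge G(\bx)$ for each $t$ this gives $\E[g(X)] \ge G(\bx)$, and combined with Lemma~\ref{lem:rounding-h} the lemma follows.

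I expect the transfer step to be the only genuine subtlety: Lemma~\ref{lem:rounding-g} is stated for each \emph{fixed} $t$, whereas the rounding runs for a random number of steps, so one must make sure the guarantee is not lost at the random stopping time. Almost-sure finiteness of $T$ from the $\Phi$-potential is exactly what powers the dominated-convergence argument above. If on top of this one wants a \emph{deterministic} polynomial bound on the number of exchanges --- needed for the running-time claim rather than for this lemma --- the clean fix is the standard merge step (once $X_a = X_b$, identify them into a single set of weight $\lambda_a+\lambda_b$), which makes the number of distinct sets strictly decrease and caps the total number of exchanges at $O(mk)$; alternatively it follows from a $1/\mathrm{poly}$ lower bound on $\min_i \lambda_i$, which is provided by the construction of the convex combination in Section~\ref{subsubsec:rounding-cc-compute}.
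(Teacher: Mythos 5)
Your proposal is correct and takes essentially the same route as the paper: the paper also derives this lemma directly by combining Lemma~\ref{lem:rounding-h} for the $h$-part with Lemma~\ref{lem:rounding-g} applied to $g$, after verifying conditions (i)--(iii) for the process $\bx^t$. The extra care you take with termination and the random stopping time is a detail the paper leaves implicit, but it does not change the argument.
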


Thus, we can round a fractional solution into an integral solution without losing the objective value (in the expectation).

\subsubsection{Computing a convex combination of $\bar{h}$}\label{subsubsec:rounding-cc-compute}

What remains to be shown is how to compute the convex combination in~\eqref{eq:convex-combination}.
Since $\bar{h}$ is a polyhedral concave function, $(\bx, \bar{h}(\bx))$ is contained in some face $F$ of the epigraph of $\bar{h}$.
Observe that it suffices to find a convex combination of $\bx$ in $P = \{ \by \in [0,1]^E : (\by, \bar{h}(\by)) \in F\}$.
To this end, we exploit the fact that $P$ forms a generalized polymatroid (g-polymatroid).

\begin{lemma}[{\cite[Proposition~6.23]{Murota2003},~\cite{Murota2000}}]
    Let $h$ be an \M-concave function and $\bp \in \bbR^E$. Then, $\argmin\{\bp^\top \bx - \bar h (\bx): \bx \in [0,1]^E  \}$ is an integral g-polymatroid.
\end{lemma}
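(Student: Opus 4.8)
The plan is to reduce this statement to a standard fact in discrete convex analysis, namely that the minimizers of a linear function over an \M-concave function form a g-polymatroid (equivalently, the maximizers of $\bar h - \bp^\top\bx$), and that $\bar h$ restricted to a face of its epigraph behaves like an \M-concave polyhedral function whose hypograph is a g-polymatroid. First I would fix $\bp \in \bbR^E$ and consider the polyhedral concave function $\bx \mapsto \bar h(\bx) - \bp^\top\bx$ on $[0,1]^E$, whose set of maximizers is $\argmin\{\bp^\top\bx - \bar h(\bx) : \bx \in [0,1]^E\}$. Because $\bar h$ is the concave closure of $h$, it is a piecewise-linear concave function, so the maximizer set is a face of the (bounded) polytope $[0,1]^E$ intersected with a level set, hence a polytope.

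The key step is to identify this polytope combinatorially. I would use the fact from discrete convex analysis (e.g.~\cite{Murota2003}, the theory of M-convex/\M-concave functions and their conjugates) that the concave closure $\bar h$ of an \M-concave $h$ is the support function / Fenchel-type transform tied to the g-polymatroid structure: for each $\bp$, the set $\argmax_{Y \subseteq E}\,(h(Y) - \bp(Y))$ of discrete maximizers has a matroidal exchange structure, and the convex hull of the characteristic vectors of these maximizers is exactly $\argmin\{\bp^\top\bx - \bar h(\bx)\}$. One then invokes the standard result that the set of maximizers of an \M-concave function is an M-concave family, whose convex hull is a g-polymatroid (this is essentially \cite[Proposition~6.23]{Murota2003} combined with the fact that g-polymatroids are exactly the base-type polytopes of \M-concave indicator-like structures). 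Integrality then follows because $h$ is defined on the integer lattice $2^E$ and the g-polymatroid spanned by $0/1$ vectors is integral.

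Concretely, the steps in order would be: (1) write $\bar h(\bx) = \max\{\sum_Y \lambda_Y h(Y) : \sum_Y \lambda_Y \bfone_Y = \bx,\ \sum_Y \lambda_Y = 1,\ \lambda \ge 0\}$ and note that by LP duality / the structure of the closure, at a point $\bx$ in $\argmin\{\bp^\top\bx - \bar h(\bx)\}$ the optimal $\lambda$ is supported only on $Y$ achieving $\max_Y (h(Y) - \bp(Y))$; (2) deduce that $\argmin\{\bp^\top\bx - \bar h(\bx)\}$ equals the convex hull of $\{\bfone_Y : Y \in \argmax_Z (h(Z) - \bp(Z))\}$; (3) invoke the exchange property of \M-concave maximizer sets to show this family of sets is itself the family underlying an \M-concave (indeed $0/1$-valued) function, so its convex hull is a g-polymatroid; (4) conclude integrality of the extreme points from the integrality of g-polymatroids generated by $0/1$ points.

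The main obstacle I expect is step (2)–(3): carefully justifying that the convex hull of the discrete maximizer set is exactly the continuous minimizer set $\argmin\{\bp^\top\bx - \bar h(\bx) : \bx \in [0,1]^E\}$, and that this discrete maximizer set has the required M-exchange structure. One has to handle the fact that $\bar h$ is only piecewise linear (so the "face" can be lower-dimensional) and that the box constraint $[0,1]^E$ interacts with the polymatroid structure — g-polymatroids are precisely the class closed under such box intersections, which is why the statement is clean, but spelling this out requires the conjugacy theory of \M-concave functions and the characterization of g-polymatroids as exactly the polytopes whose edges are parallel to $\bfone_i$ or $\bfone_i - \bfone_j$. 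I would lean on \cite{Murota2000,Murota2003} for these structural facts rather than reprove them.
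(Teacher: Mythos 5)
The paper does not actually prove this lemma---it is quoted verbatim from Murota's work---so there is no in-paper argument to compare against; your sketch is a correct reconstruction of the standard derivation behind the cited result. Specifically: since $\bp^\top\bx = \sum_Y \lambda_Y \bp(Y)$ for any convex representation of $\bx$, the continuous argmin is exactly the convex hull of $\set{\bfone_Y : Y \in \argmax_Z (h(Z)-\bp(Z))}$; that maximizer family is an M$^\natural$-convex set because applying the exchange axiom of the (still \M-concave) function $h-\bp$ at two maximizers forces the exchanged sets to be maximizers as well; and convex hulls of M$^\natural$-convex sets are precisely the integral g-polymatroids, which settles both the structure and the integrality.
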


Furthermore, we can focus on $P' = P \cap \{X \subseteq E : |X| = k\}$ since $\bx$ is in the convex hull of this set (note that $\bx(E) = k)$.
Note that if we can find a convex combination in $P'$, the obtained sets $X_1,\ldots,X_m$ must have a size of $k$.

Since any restriction of an integral $g$-polymatroid with the cardinality constraint yields a matroid base polytope (see, e.g.,~\cite{Murota2003,Fujishige2005}), we can use the algorithm of Cunningham~\cite{Cunningham1984} to find the desired convex combination.
To run this algorithm, we require a rank oracle of the corresponding matroid.
Such rank oracles can be implemented via the following characterization.

\begin{lemma}[folklore, {see~\cite[Theorem~1.9]{Murota2003}}]
    Given a matroid base polytope $B$, its associated rank function $r:2^E \to \bbZ_+$ can be written as
    \begin{align}
        r(X) = \max\{\bx(X) : \bx \in B\}.
    \end{align}
\end{lemma}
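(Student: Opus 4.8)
The plan is to unfold both sides of the claimed identity and to show that each equals $\max\{|B_0 \cap X| : B_0 \text{ is a base of } M\}$, where $M = (E,\caI)$ is the matroid whose base polytope is $B$ and $r$ is its rank function, $r(X) = \max\{|I| : I \subseteq X,\ I \in \caI\}$.

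First I would handle the left-hand side. Since $B$ is a polytope — indeed the convex hull of the finitely many characteristic vectors $\bfone_{B_0}$ of the bases $B_0$ of $M$ — the linear functional $\bx \mapsto \bx(X) = \sum_{i \in X}\bx(i)$ attains its maximum over $B$ at a vertex of $B$, and every vertex of $B$ is of the form $\bfone_{B_0}$ for some base $B_0$ (the vertex set of the convex hull of a finite point set is a subset of that point set). Hence $\max\{\bx(X) : \bx \in B\} = \max\{\bfone_{B_0}(X) : B_0 \text{ a base}\} = \max\{|B_0 \cap X| : B_0 \text{ a base}\}$.

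Next I would show $\max\{|B_0 \cap X| : B_0 \text{ a base}\} = r(X)$ via two inequalities. For ``$\leq$'': for any base $B_0$, the set $B_0 \cap X$ is a subset of $B_0$, hence independent by downward closedness of $\caI$, and it is contained in $X$, so $|B_0 \cap X| \leq r(X)$. For ``$\geq$'': pick a maximum-size independent set $I \subseteq X$, so $|I| = r(X)$; using axiom (iii) repeatedly — at each step, if the current independent set is strictly smaller than some base, (iii) supplies an element of that base preserving independence — extend $I$ to a base $B_0 \supseteq I$. Then $B_0 \cap X \supseteq I$, so $|B_0 \cap X| \geq r(X)$. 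Combining the two displays yields the claim.

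The argument is essentially routine, consistent with the ``folklore'' label; the only place that deserves care is the passage from ``maximum over the polytope $B$'' to ``maximum over bases,'' which rests on the standard fact that a linear function on a polytope is maximized at a vertex, together with the identification of the vertices of a matroid base polytope with the bases themselves. A secondary point I would spell out, if needed, is that every independent set extends to a base (and, implicitly, that all bases have equal size), both of which follow from axiom (iii); with those facts in hand the two-inequality argument closes immediately.
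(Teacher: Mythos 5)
Your argument is correct and complete. Note that the paper itself offers no proof of this lemma---it is stated as folklore with a citation to Murota's monograph---so there is no in-paper argument to compare against; your two-step reduction (linear optimization over a polytope is attained at a vertex, vertices of $B$ lie among the characteristic vectors of bases, and then $\max\{|B_0 \cap X| : B_0 \text{ a base}\} = r(X)$ by the two inequalities using downward closure and the exchange axiom) is exactly the standard proof one would supply.
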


Therefore, we can obtain its rank function by linear optimization on $P'$.
More precisely, we can compute the rank of $X \subseteq E$ by the following convex programming:
\begin{align*}
    \text{maximize} & \quad \bx(X), \\
    \text{subject to} &\quad \bx(E) = k, \\
                   &\quad \bp^\top\bx - \bar{h}(\bx) \leq \alpha, \\
                   &\quad \bx \in [0,1]^E,
\end{align*}
where $\alpha := \min\{\bp^\top\by - \bar{h}(\by) : \by \in \bbR^E\}$.
This problem again can be solved by the ellipsoid method.


\section{Decomposition via \M-concave Quadratic Functions}\label{sec:decomposition}
In this section, we describe a general decomposition scheme that exploits \M-concave quadratic functions.

For a vector $\bd \in \bbR^E$, we define $\Diag(\bd) \in \bbR^{E \times E}$ as a diagonal matrix such that $\Diag(\bd)_{ii} = \bd(i)$ for every $i \in E$.
The following characterization of \M-concave quadratic set functions is known:
\begin{lemma}[\cite{Hirai2004}, also see~{\cite[Theorem~4.3]{Murota2016}}]\label{lem:mconcave-characterization}
    Let $A \in \bbR^{E \times E}$ be a symmetric matrix, and define $h:2^E \to \bbR$ as $h(X) = \frac{1}{2} \bfone_X^\top A \bfone_X = \frac{1}{2}\sum_{i \in X} a_{ii} + \sum_{i,j \in X: i \neq j} a_{ij}$.
    Then, $h(X)$ is \M-concave if and only if
    \begin{align}
        A_{ij} &\leq 0 \quad \text{for any distinct $i, j \in E$}, and \label{eq:quad-cond-1} \\
        A_{ij} &\leq \max\{A_{ik}, A_{jk}\}  \quad\text{for any distinct $i,j,k \in E$}. \label{eq:quad-ultrametric}
    \end{align}
    Furthermore, the above conditions are equivalent to the following:
    there exist a laminar family $\caL \subseteq 2^E$, $\lambda_L \leq 0$ ($L \in \caL$), and $\bd \in \bbR^E$ such that
    \[
        A = \sum_{L \in \caL} \lambda_L \bfone_L \bfone_L^\top + \Diag(\bd).
    \]
\end{lemma}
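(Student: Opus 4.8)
The plan is to prove all three characterizations equivalent by establishing the cycle: $h$ is \M-concave $\Rightarrow$ \eqref{eq:quad-cond-1} and \eqref{eq:quad-ultrametric} hold $\Rightarrow$ $A=\sum_{L\in\caL}\lambda_L\bfone_L\bfone_L^\top+\Diag(\bd)$ for some laminar $\caL$ and $\lambda_L\le 0$ $\Rightarrow$ $h$ is \M-concave. The first step uses only submodularity together with one well-chosen pair in the exchange axiom; the second step is the classical ultrametric/laminar (``dendrogram'') correspondence; the third step reduces the representation to the already-known fact that laminar concave functions are \M-concave.

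\emph{First implication.} Since every \M-concave function is submodular, $h$ is submodular. Expanding the quadratic form gives, for $i\notin X$, $h(i\mid X)=\tfrac12 A_{ii}+\sum_{j\in X}A_{ij}$, hence $h(i\mid X+k)-h(i\mid X)=A_{ik}$ for distinct $i,k\notin X$; submodularity forces these marginal differences to be nonpositive, i.e.\ $A_{ij}\le 0$ for all distinct $i,j$, which is \eqref{eq:quad-cond-1}. For \eqref{eq:quad-ultrametric}, assume toward a contradiction that $A_{ij}>\max\{A_{ik},A_{jk}\}$ for some distinct $i,j,k$, and apply the exchange axiom to $X=\{i,j\}$, $Y=\{k\}$, and the element $i\in X\setminus Y$. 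A direct evaluation of the quadratic form gives $[h(X)+h(Y)]-[h(X-i)+h(Y+i)]=A_{ij}-A_{ik}>0$, so option (i) fails; and since $k$ is the only element of $Y\setminus X$, $[h(X)+h(Y)]-[h(X-i+k)+h(Y+i-k)]=A_{ij}-A_{jk}>0$, so option (ii) fails as well. This contradicts the \M-concavity of $h$, so \eqref{eq:quad-ultrametric} must hold.

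\emph{Second implication.} Put $B_{ij}:=-A_{ij}\ge 0$ for distinct $i,j$. Condition \eqref{eq:quad-ultrametric} becomes $B_{ij}\ge\min\{B_{ik},B_{jk}\}$ for all distinct triples, equivalently: within each triple the two smallest of $B_{ij},B_{ik},B_{jk}$ coincide. It follows that for every threshold $\theta>0$ the relation ``$B_{ij}\ge\theta$'' (together with reflexivity) is transitive, hence an equivalence relation on $E$, and that the induced partitions are nested, becoming finer as $\theta$ grows. Let $\caL$ be the (laminar) family of all equivalence classes arising over all $\theta>0$, and assign to each $L\in\caL$ the total length $\mu_L\ge 0$ of the set of thresholds for which $L$ is a class; a telescoping argument along the nested chain of classes containing a given pair $\{i,j\}$ yields $B_{ij}=\sum_{L\in\caL,\,L\supseteq\{i,j\}}\mu_L$. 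Setting $\lambda_L:=-\mu_L\le 0$ gives $A_{ij}=\sum_{L\supseteq\{i,j\}}\lambda_L$ for $i\ne j$, and defining $\bd(i):=A_{ii}-\sum_{L\in\caL,\,L\ni i}\lambda_L$ yields $A=\sum_{L\in\caL}\lambda_L\bfone_L\bfone_L^\top+\Diag(\bd)$.

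\emph{Third implication, and the main obstacle.} If $A$ has this form, then $h(X)=\tfrac12\bfone_X^\top A\bfone_X=\tfrac12\sum_{L\in\caL}\lambda_L|X\cap L|^2+\tfrac12\bd(X)$. Each map $t\mapsto\tfrac12\lambda_L t^2$ is concave on $\bbR_+$ since $\lambda_L\le 0$, and $\tfrac12\bd(X)=\sum_{i\in E}\tfrac12\bd(i)\,|X\cap\{i\}|$ is a sum of linear (hence concave) functions over the singletons, which are laminar; so $h$ has the form $\sum_L\phi_L(|X\cap L|)$ with $\phi_L$ concave over a laminar family, and is therefore \M-concave by the example of laminar concave functions in Section~\ref{sec:pre}. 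I expect the main obstacle to be the second implication: carefully establishing the correspondence between the ultrametric condition on $B$ and a nonnegatively weighted laminar family reproducing $B$ off the diagonal, in particular handling ties among the values $B_{ij}$ and making the interval-length weights $\mu_L$ and the telescoping precise. The first implication is a short computation with the quadratic form, and the third is immediate from the already-stated fact about laminar concave functions.
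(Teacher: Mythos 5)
The paper does not prove this lemma at all --- it is imported as a known result from Hirai--Murota and from \cite[Theorem~4.3]{Murota2016} --- so there is no in-paper argument to compare against; your proposal supplies a self-contained proof, and it is the standard one. The cycle of implications is sound: the computation $h(i\mid X)=\tfrac12 A_{ii}+\sum_{j\in X}A_{ij}$ and the triple $X=\{i,j\}$, $Y=\{k\}$ do correctly force \eqref{eq:quad-cond-1} and \eqref{eq:quad-ultrametric} (note you are implicitly using the corrected exchange axiom $h(Y+i-j)$, which is what the paper's Section~1 states; Definition~3 in the paper has a typo); the threshold construction is the right way to pass from the ``anti-ultrametric'' condition on $B=-A$ to a nonnegatively weighted laminar family; and the final step correctly reduces to the laminar concave example already stated in Section~\ref{sec:pre}, since $t\mapsto\tfrac12\lambda_L t^2$ is concave for $\lambda_L\le 0$ and the modular term is a laminar concave function over singletons. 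One small point you should pin down in the second implication: a singleton $\{i\}$ is an equivalence class for \emph{every} $\theta>\max_{j\ne i}B_{ij}$, so its ``interval length'' $\mu_{\{i\}}$ is infinite as you have defined it, which would make $\bd(i)=A_{ii}-\sum_{L\ni i}\lambda_L$ ill-defined. The fix is trivial --- either restrict $\theta$ to $(0,\max_{i\ne j}B_{ij}]$ or simply exclude singleton classes from $\caL$, since their rank-one contributions are diagonal and are absorbed into $\Diag(\bd)$ anyway --- but as written the construction is not quite literally well-posed. With that repair the argument is complete.
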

Note that the condition in~\eqref{eq:quad-ultrametric} says that $A_{ij}$ must form an \emph{ultrametric}.
The above condition is closely related to the concept of a discrete Hessian matrix:
\begin{definition}
    The \emph{discrete Hessian} $\Hess_f(X) \in \bbR^{E \times E}$ of a set function $f:2^E \to \bbR$ at a set $X \subseteq E$ is defined as
    \begin{align*}
        \Hess_f(X)_{ij} = f(X + i + j) + f(X) - f(X + i) - f(X + j). \quad (i, j \in E)
    \end{align*}
\end{definition}
We note that $\Hess_f(X)_{ii} = 0$ for any $i \in E$ and that $\Hess_f(X)_{ij} = 0$ when $i$ or $j$ belongs to $X$.
By definition, a set function $f:2^E \to \bbR$ is submodular if and only if $\Hess_f(X)_{ij} \leq 0$ for any $X \subseteq E$ and distinct $i,j \in E$.
A characterization of an \M-concave function in terms of a discrete Hessian is also known:
\begin{lemma}[{\cite[Theorem~4.3]{Murota2016}}]
    A set function $f: 2^E \to \bbR$ is \M-concave if and only if its discrete Hessian satisfies the conditions in~\eqref{eq:quad-cond-1} and~\eqref{eq:quad-ultrametric} at any $X \subseteq E$.
\end{lemma}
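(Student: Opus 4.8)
The plan is to prove the two implications separately; the forward direction is elementary, while the converse rests on the fact that \M-concavity is a \emph{local} property.

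For the direction from \M-concavity to the Hessian conditions, note first that every \M-concave function is submodular, so $\Hess_f(X)_{ij}\le 0$ for all $X\subseteq E$ and distinct $i,j$, which is~\eqref{eq:quad-cond-1}. For~\eqref{eq:quad-ultrametric}, fix $X$ and distinct $i,j,k$. If at least one of $i,j,k$ lies in $X$, then at least two of $\Hess_f(X)_{ij},\Hess_f(X)_{ik},\Hess_f(X)_{jk}$ vanish while the third is nonpositive, so the inequality is immediate. Otherwise $i,j,k\notin X$, and I would apply the exchange axiom to the pair $X+i+j$ and $X+k$ with transferred element $i\in(X+i+j)\setminus(X+k)$. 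Alternative~\eqref{eq:mconcave-incr} reads $f(X+i+j)+f(X+k)\le f(X+j)+f(X+i+k)$, which is precisely $\Hess_f(X)_{ij}\le\Hess_f(X)_{ik}$ after substituting the definitions of the two Hessian entries; alternative~\eqref{eq:mconcave-exc}, whose only admissible swap element of $(X+k)\setminus(X+i+j)$ is $k$ itself, reads $f(X+i+j)+f(X+k)\le f(X+j+k)+f(X+i)$, which is precisely $\Hess_f(X)_{ij}\le\Hess_f(X)_{jk}$. In either case $\Hess_f(X)_{ij}\le\max\{\Hess_f(X)_{ik},\Hess_f(X)_{jk}\}$, as required.

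For the converse I would invoke the local characterization of \M-concave set functions (see, e.g.,~\cite{Murota2003}): a set function satisfies the global exchange axiom as soon as it satisfies it for every pair $X,Y$ whose symmetric difference $|X\triangle Y|$ is below a fixed constant. It then suffices to verify the exchange axiom for such nearby pairs, and for a nearby pair the $O(1)$ function values appearing in~\eqref{eq:mconcave-incr} and~\eqref{eq:mconcave-exc} telescope, once copies of $f(Z)$ and of the single-element marginals at $Z:=X\cap Y$ are cancelled, into entries of the discrete Hessian of $f$ at $Z$. For instance the pair $X=Z+i+i'$, $Y=Z+k$ reduces exactly to $\Hess_f(Z)_{ii'}\le\max\{\Hess_f(Z)_{ik},\Hess_f(Z)_{i'k}\}$, the pair $X=Z+i$, $Y=Z+k+k'$ reduces via alternative~\eqref{eq:mconcave-exc} to $\Hess_f(Z)_{kk'}\le\max\{\Hess_f(Z)_{ik},\Hess_f(Z)_{ik'}\}$, and a pair with $Y\subseteq X$ reduces to an instance of submodularity. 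Running through the finitely many shapes of nearby pairs, each required inequality turns out to be an instance of~\eqref{eq:quad-ultrametric} or of submodularity, the latter being exactly what~\eqref{eq:quad-cond-1} imposed at every set amounts to. By Lemma~\ref{lem:mconcave-characterization} this is as it should be: demanding the Hessian conditions at every $Z$ is the same as demanding that the local quadratic model $Y\mapsto\tfrac{1}{2}\,\bfone_Y^\top\Hess_f(Z)\,\bfone_Y$ of $f$ at $Z$ be \M-concave.

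The forward direction is pure bookkeeping once the pair $(X+i+j,\,X+k)$ has been chosen, so I expect the real obstacle to be the converse, and specifically the reduction of the global exchange axiom to the case of bounded symmetric difference. That reduction is the content of the local characterization theorem for \M-concave functions, whose own proof is an induction on $|X\triangle Y|$ that reroutes a ``far'' exchange through nearer ones; I would cite it rather than reproduce it. A secondary care point is to make the case analysis in the converse exhaustive, covering the shapes with cardinality changes and with deletions and not only those consisting of additions above a common base set.
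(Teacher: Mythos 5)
The paper does not prove this lemma at all --- it is imported verbatim from Murota's survey as \cite[Theorem~4.3]{Murota2016} --- so there is no in-paper argument to compare against; I can only assess your proof on its own terms. Your forward direction is correct and complete: applying the exchange axiom to the pair $(X+i+j,\,X+k)$ with transferred element $i$ does yield exactly $\Hess_f(X)_{ij}\le\Hess_f(X)_{ik}$ from alternative~\eqref{eq:mconcave-incr} and $\Hess_f(X)_{ij}\le\Hess_f(X)_{jk}$ from alternative~\eqref{eq:mconcave-exc} (note you correctly used the standard form $f(Y+i-j)$ of the exchange; the paper's display~\eqref{eq:mconcave-exc} contains a typo), and the degenerate cases where some of $i,j,k$ lie in $X$ are handled by the vanishing of the corresponding Hessian entries. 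Your converse follows the standard route: reduce the global exchange axiom to pairs of bounded symmetric difference via the local characterization theorem, then check that each local exchange is implied by the Hessian data at $Z=X\cap Y$. Citing the local-to-global theorem rather than reproving it is reasonable here, since the paper itself cites the entire lemma.

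There is, however, a concrete gap in your case analysis for the converse. Your list of ``shapes'' covers $(Z+i+i',\,Z+k)$, $(Z+i,\,Z+k+k')$, and nested pairs, but omits the pairs of shape $(Z+i+i',\,Z+k+k')$ with all four elements distinct, and this is precisely the case that does \emph{not} reduce to a single instance of~\eqref{eq:quad-ultrametric}. For that pair, alternative~\eqref{eq:mconcave-incr} involves $f(Z+i+k+k')$, which does not telescope into Hessian entries at $Z$, so your blanket claim that the function values ``telescope into entries of the discrete Hessian at $Z$'' fails for that alternative; one must instead establish that \emph{some} admissible swap works, and the swap alternatives amount to
\[
  \Hess_f(Z)_{ii'}+\Hess_f(Z)_{kk'}\;\le\;\max\bigl\{\Hess_f(Z)_{ik}+\Hess_f(Z)_{i'k'},\ \Hess_f(Z)_{ik'}+\Hess_f(Z)_{i'k}\bigr\}.
\]
Deducing this from~\eqref{eq:quad-cond-1} and~\eqref{eq:quad-ultrametric} is exactly the content of the quadratic characterization (Lemma~\ref{lem:mconcave-characterization}, i.e., Hirai--Murota): the ultrametric condition gives the laminar decomposition $A=\sum_L\lambda_L\bfone_L\bfone_L^\top+\Diag(\bd)$ with $\lambda_L\le 0$, from which the displayed two-swap inequality follows. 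You gesture at this connection in your closing remark, but as written the $(2,2)$ case is missing from the enumeration, and it is the one place where the converse genuinely needs more than a triple-wise ultrametric inequality. Filling in that case (or explicitly invoking Lemma~\ref{lem:mconcave-characterization} for the restriction of $\Hess_f(Z)$ to $\{i,i',k,k'\}$) would complete the argument.
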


For a monotone submodular function $f:2^E \to \bbR_+$, consider a decomposition $f = g + h$, where $h(X) = \frac{1}{2} \bfone_X^\top A \bfone_X\;(X \subseteq E)$ for some matrix $A \in \bbR^{E \times E}$.
In the following, we derive a sufficient condition of $A$ that makes $g$ nonnegative monotone submodular and makes $h$ nonnegative \M-concave so that we can apply Theorem~\ref{thm:main}.

For the monotonicity of $g$, we must have $g(i \mid X) = f(i \mid X) - h(i \mid X) \geq 0$ for any $i \in E$ and $X \subseteq E - i$.
This yields
\begin{align}\label{eq:A-monotonicity}
    \sum_{k \in X}A_{ik} + \frac{1}{2}A_{ii} \leq f(i \mid X). \quad (i \in E, X \subseteq E -i)
\end{align}
We note that, if~\eqref{eq:A-monotonicity} was satisfied, $g$ becomes automatically nonnegative as $g(\emptyset)=0$.

For the submodularity of $g$, we must have $\Hess_g(X)_{ij} = \Hess_f(X)_{ij} - \Hess_h(X)_{ij} \leq 0$ for any distinct $i, j \in E$ and $X \subseteq E \setminus \{i, j\}$.
This yields
\begin{align}\label{eq:A-submodularity}
    A_{ij} \geq \Hess_f(X)_{ij}. \quad (i, j \in E, i \neq j, X \subseteq E \setminus \{i,j\})
\end{align}
Finally, for the nonnegativity of $h$, we must have
\begin{align}\label{eq:A-nonnegativity}
    \bfone_X^\top A \bfone_X \geq 0 \quad (X \subseteq E).
\end{align}

It is difficult to handle these conditions simultaneously.
Fortunately, we can show that we need to consider only the condition in~\eqref{eq:A-submodularity} and the nonpositivity constraint.

\begin{lemma}
    Suppose that for any distinct $i, j \in E$, we are given $A_{ij}$ with $\Hess_f(X)_{ij} \leq A_{ij} \leq 0$.
    Let us set
    $A_{ii} := 2f(i \mid E - i) - 2 \sum_{k \neq i} A_{ik}$ for $i \in E$.
    Then, the resulting symmetric matrix $A \in \bbR^{E \times E}$ satisfies the conditions in~\eqref{eq:A-monotonicity} and~\eqref{eq:A-nonnegativity}.
\end{lemma}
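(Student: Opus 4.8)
The plan is to verify the two claimed conditions separately, using the specific choice of diagonal entries $A_{ii} = 2f(i\mid E-i) - 2\sum_{k\neq i}A_{ik}$.

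First I would establish the monotonicity condition~\eqref{eq:A-monotonicity}. Fix $i\in E$ and $X\subseteq E-i$. Substituting the definition of $A_{ii}$, the left-hand side of~\eqref{eq:A-monotonicity} becomes $\sum_{k\in X}A_{ik} + f(i\mid E-i) - \sum_{k\neq i}A_{ik} = f(i\mid E-i) - \sum_{k\in E-i-X}A_{ik}$. Since every off-diagonal entry satisfies $A_{ik}\le 0$, the subtracted sum is $\le 0$, so the left-hand side is at least $f(i\mid E-i)$; I need it to be at most $f(i\mid X)$. So it suffices to show $f(i\mid E-i) - \sum_{k\in E-i-X}A_{ik} \le f(i\mid X)$, i.e. $f(i\mid X) - f(i\mid E-i) \ge -\sum_{k\in E-i-X}A_{ik} = \sum_{k\in E-i-X}(-A_{ik})$. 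The key observation is that $f(i\mid X) - f(i\mid E-i)$ telescopes over the elements of $E-i-X$: enumerating $E-i-X = \{k_1,\dots,k_r\}$ and building up from $X$ to $E-i$ one element at a time, $f(i\mid E-i) - f(i\mid X) = \sum_{t} \bigl(f(i\mid X_t+k_t) - f(i\mid X_t)\bigr) = \sum_t \Hess_f(X_t)_{i k_t}$ for the appropriate intermediate sets $X_t$. Each such term is $\ge A_{i k_t}$ by hypothesis $\Hess_f(X)_{ij}\le A_{ij}$, so $f(i\mid E-i) - f(i\mid X) \ge \sum_t A_{i k_t} = \sum_{k\in E-i-X}A_{ik}$, which rearranges to exactly what is needed. (This is the step I expect to require the most care — getting the telescoping and the direction of all the inequalities straight.)

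Next I would handle nonnegativity~\eqref{eq:A-nonnegativity}, i.e. $\bfone_X^\top A\bfone_X \ge 0$ for all $X\subseteq E$; equivalently $h(X)\ge 0$. Here I would use the monotonicity of $g$ that was just proved, together with $h = f - g$: since $g$ is monotone with $g(\emptyset)=0$ we have $g(X)\le g(E)$ — but more directly, I want a lower bound on $h$. Note $h(X) = \tfrac12\sum_{i\in X}A_{ii} + \sum_{\{i,j\}\subseteq X}A_{ij}$. Unwinding $A_{ii}$ gives $h(X) = \sum_{i\in X}f(i\mid E-i) - \sum_{i\in X}\sum_{k\neq i}A_{ik} + \sum_{\{i,j\}\subseteq X}A_{ij} = \sum_{i\in X}f(i\mid E-i) - \sum_{i\in X}\sum_{k\notin X}A_{ik} - \sum_{\{i,j\}\subseteq X}A_{ij}$. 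Since $f$ is monotone, $f(i\mid E-i)\ge 0$, and since all the $A_{ik}$ with $i\in X, k\notin X$ are $\le 0$ the term $-\sum_{i\in X}\sum_{k\notin X}A_{ik}\ge 0$; the remaining piece $-\sum_{\{i,j\}\subseteq X}A_{ij}$ is also $\ge 0$ because $A_{ij}\le 0$ off-diagonal. Hence every term is nonnegative and $h(X)\ge 0$. (Alternatively, and perhaps more cleanly, one can argue $h(X) = f(X) - g(X) \ge 0$ once one checks $g(X)\le f(X)$ directly from $g = f-h$ being monotone and $g(\emptyset)=f(\emptyset)=0$ — but $f(\emptyset)=0$ must be assumed or noted; I would pick whichever route avoids extra hypotheses.)

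Finally I would note that symmetry of $A$ is automatic from the construction (the off-diagonal entries are given symmetrically, and the diagonal is well-defined), and collect the two verifications to conclude. The main obstacle, as flagged, is the telescoping argument for~\eqref{eq:A-monotonicity}; everything else is a direct sign-chasing computation using $A_{ij}\le 0$ and the monotonicity of $f$.
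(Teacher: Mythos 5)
Your proposal is essentially the paper's own proof: the telescoping from $X$ up to $E-i$ is exactly the paper's observation that adding any $j\notin X$ changes $f(i\mid X)-\sum_{k\in X}A_{ik}$ by $\Hess_f(X)_{ij}-A_{ij}\le 0$, so $X=E-i$ is the minimizer, and your term-by-term expansion of $h(X)$ is just an explicit verification of the paper's ``$A$ is diagonally dominant, hence $\bfone_X^\top A\bfone_X\ge 0$'' step. One sign slip to fix in the write-up: the hypothesis $\Hess_f(X_t)_{ik_t}\le A_{ik_t}$ gives $f(i\mid E-i)-f(i\mid X)\le\sum_t A_{ik_t}$ (not $\ge$), and it is this $\le$ direction that rearranges to the inequality you reduced to.
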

\begin{proof}
    Let us begin with the condition in~\eqref{eq:A-monotonicity}, which is equivalent to
    \begin{align*}
        A_{ii} \leq 2 \min_{X \subseteq E - i}\left[f(i \mid X) - \sum_{k \in X} A_{ik} \right]
    \end{align*}
    for every $i \in E$.

    We fix $i \in E$, and we will show that $X = E - i$ is a minimizer of $\min_{X \subseteq E - i}\left[f(i \mid X) - \sum_{k \in X} A_{ik} \right]$.
    Let us arbitrarily take $X \subseteq E - i$ and consider $j \not\in X$ ($j \neq i$).
    If we add $j$ to $X$, the change in the objective value is
    \begin{align*}
        f(i \mid X + j) - f(i \mid X) - A_{ij}
        &= f(X + i + j) - f(X + j) - f(X + i) + f(X) - A_{ij} \\
        &= \Hess_f(X)_{ij} - A_{ij} \leq 0.
    \end{align*}
    Thus, we can always add $j$ to $X$ without increasing the objective value.

    From the argument above, by setting $A_{ii} = 2f(i \mid E - i) - 2 \sum_{k \neq i} A_{ik}$ for every $i \in E$, we can satisfy the condition in~\eqref{eq:A-monotonicity}.
    Then, the condition in~\eqref{eq:A-nonnegativity} also follows since the resulting $A$ is diagonally dominant (note that $A_{ij} \leq 0$).
\end{proof}

To summarize, we have the following:
\begin{lemma}
    Suppose that a matrix $A \in \bbR^{E \times E}$ satisfies the conditions in~\eqref{eq:quad-cond-1},~\eqref{eq:quad-ultrametric}, and~\eqref{eq:A-submodularity}.
    Then, $h(X) = \frac{1}{2}\bfone_X^\top A \bfone_X$ is nonnegative \M-concave, and $g = f - h$ is nonnegative monotone submodular.
\end{lemma}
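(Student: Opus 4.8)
The plan is to assemble the final lemma directly from the two preceding lemmas, so the proof is essentially a bookkeeping exercise that checks each of the required conditions on $A$ in turn. First I would address the \M-concavity of $h$. By Lemma~\ref{lem:mconcave-characterization}, the quadratic function $h(X)=\frac12\bfone_X^\top A\bfone_X$ is \M-concave precisely when $A$ satisfies~\eqref{eq:quad-cond-1} and~\eqref{eq:quad-ultrametric}, and these are exactly two of the three hypotheses we are given; so this part is immediate. The nonnegativity of $h$ requires condition~\eqref{eq:A-nonnegativity}, which does not appear verbatim among the hypotheses, so here I would appeal to the previous lemma: the hypotheses~\eqref{eq:quad-cond-1},~\eqref{eq:quad-ultrametric},~\eqref{eq:A-submodularity} include the off-diagonal sign constraint $A_{ij}\le 0$ and the submodular-Hessian lower bound $\Hess_f(X)_{ij}\le A_{ij}$, which is precisely the setup of the preceding lemma once we also fix the diagonal via $A_{ii}:=2f(i\mid E-i)-2\sum_{k\neq i}A_{ik}$; that lemma then gives us~\eqref{eq:A-monotonicity} and~\eqref{eq:A-nonnegativity} for free.

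The mild subtlety is that the statement of the final lemma only says ``a matrix $A$ satisfies \eqref{eq:quad-cond-1}, \eqref{eq:quad-ultrametric}, and \eqref{eq:A-submodularity}'', without explicitly pinning down the diagonal entries. So the honest thing to do in the proof is to note that~\eqref{eq:A-submodularity} together with $\Hess_f(X)_{ii}=0$ (recorded right after the definition of the discrete Hessian) already forces $A_{ii}\ge 0$, but for the nonnegativity of $h$ we want the stronger diagonal-dominance property, which the preceding lemma obtains by the specific choice of $A_{ii}$. I would therefore either (a) read the hypothesis as implicitly including that choice of diagonal, consistent with the running discussion in this section, or (b) phrase the proof so that it invokes the preceding lemma with that diagonal. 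Concretely: by the preceding lemma, $A$ satisfies~\eqref{eq:A-monotonicity} and~\eqref{eq:A-nonnegativity}; the latter is exactly the nonnegativity of $h$.

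For $g=f-h$, monotonicity and submodularity each reduce to an inequality on $A$ that has already been derived in the text. Monotonicity of $g$ is equivalent to $g(i\mid X)=f(i\mid X)-h(i\mid X)\ge 0$ for all $i$ and $X\subseteq E-i$, and expanding $h(i\mid X)=\frac12 A_{ii}+\sum_{k\in X}A_{ik}$ shows this is exactly~\eqref{eq:A-monotonicity}, which we have just established; since $g(\emptyset)=f(\emptyset)-h(\emptyset)=0$, monotonicity then also yields $g\ge 0$. Submodularity of $g$ is equivalent to $\Hess_g(X)_{ij}=\Hess_f(X)_{ij}-\Hess_h(X)_{ij}\le 0$ for distinct $i,j$ and $X\subseteq E\setminus\{i,j\}$; since $\Hess_h(X)_{ij}=A_{ij}$ (the quadratic's discrete Hessian is constant and equal to the off-diagonal entry of $A$), this is precisely~\eqref{eq:A-submodularity}, which is a hypothesis. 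Putting these together, $g$ is nonnegative monotone submodular and $h$ is nonnegative \M-concave, which is the claim.

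The only place that takes more than routine care is the nonnegativity of $h$, i.e.\ getting from the componentwise constraints on $A$ to the global quadratic-form inequality~\eqref{eq:A-nonnegativity}; but since the preceding lemma already does this via the diagonal-dominance observation ($A_{ij}\le 0$ off-diagonal and $A_{ii}\ge -\sum_{k\neq i}A_{ik}=\sum_{k\neq i}|A_{ik}|$ by the choice of diagonal, hence $A$ is a symmetric diagonally dominant matrix with nonnegative diagonal, hence positive semidefinite), I would simply cite it rather than redo the argument. Everything else is a direct translation between ``marginal of $g$'' / ``Hessian of $g$'' statements and the linear inequalities on the entries of $A$.
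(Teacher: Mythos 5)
Your proof is correct and matches the paper's intended argument: the paper states this lemma without proof as a summary of the immediately preceding derivation, namely that \eqref{eq:quad-cond-1} and \eqref{eq:quad-ultrametric} give \M-concavity of $h$ via Lemma~\ref{lem:mconcave-characterization}, \eqref{eq:A-submodularity} gives submodularity of $g$ since $\Hess_h(X)_{ij}=A_{ij}$, and the preceding lemma's diagonal choice yields \eqref{eq:A-monotonicity} and \eqref{eq:A-nonnegativity}, hence monotonicity of $g$ and nonnegativity of $h$. You are right to flag that the statement implicitly assumes the diagonal is fixed as $A_{ii}=2f(i\mid E-i)-2\sum_{k\neq i}A_{ik}$; your reading (b) is the one consistent with how the decomposition is used in the rest of the section.
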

Trivially, we can take $A = 2\Diag(f(i \mid E - i) : i \in E)$, which yields $h(X) = \sum_{i \in X}f(i\mid E-i)$.
This decomposition corresponds to that via the standard curvature.
In the following, we discuss how we can find a nontrivial matrix $A$ satisfying the conditions in~\eqref{eq:quad-cond-1},~\eqref{eq:quad-ultrametric}, and~\eqref{eq:A-submodularity}.

\subsection{Ultrametric fitting problem}\label{subsec:ultrametric}

Assume that we can compute $H_{ij}$ such that $\max\limits_{X \subseteq E-i-j} \Hess_f(X)_{ij} \leq H_{ij} \leq 0$ for any distinct $i, j \in E$.
Once we are given such $H_{ij}$, finding a matrix $A$ satisfying the conditions in~\eqref{eq:quad-cond-1},~\eqref{eq:quad-ultrametric}, and~\eqref{eq:A-submodularity} boils down to the \emph{ultrametric fitting problem}~\cite{Farach1995}.
In this problem, given $H_{ij} \leq 0$ for distinct $i, j \in E$, we are to find $A_{ij} \in [H_{ij} , 0]$ for distinct $i, j \in E$ satisfying \eqref{eq:quad-ultrametric} that minimizes $\max_{i \neq j}|A_{ij} - H_{ij}|$.
Farach~et~al.~\cite{Farach1995} gave an $O(n^2)$-time algorithm for this problem.
Therefore, we can find a matrix $A$ satisfying the conditions in~\eqref{eq:quad-cond-1},~\eqref{eq:quad-ultrametric}, and~\eqref{eq:A-submodularity} that is close to $H_{ij}$.
Although this method is general, it is difficult to theoretically bound the curvature of the resulting decomposition.


\subsection{Coverage function}\label{subsec:coverage}
Let $G = (E, V; A)$ be a bipartite graph.
The coverage function $f:2^E \to \bbR$ associated with $G$ is defined as $f(X) := \sum_{v \in V} \min\{1, |\Gamma(v) \cap X|\}$, where $\Gamma(v) \subseteq E$ is the set of neighbors of $v \in V$.
It is well-known that $f$ is a monotone submodular function.

Now, we consider decomposing $f$ into a monotone submodular function and monotone \M-concave function.
For each $v \in V$, we define $f_v:2^E \to \bbR$ as $f_v(X) = \min\{1, |\Gamma(v) \cap X|\}$.
Then, for distinct $i, j \in E$ and $X \subseteq E - i -j$, $\Hess_{f_v}(X)_{ij}$ is equal to $-1$ if $v \not\in \Gamma(X)$ and $i,j \in \Gamma(v)$ and $0$ otherwise.
Thus, we have $\Hess_f(X)_{ij} = -  |\Gamma(i) \cap \Gamma(j) - \Gamma(X)|$.
Since $\Hess_f(X)_{ij}$ is obviously nondecreasing, we can compute the exact value of $\max_{X \subseteq E\setminus\{i,j\}} \Hess_f(X)_{ij} = \Hess_f(E-i-j)_{ij} = -c_{ij}$, where $c_{ij}= |\Gamma(i) \cap \Gamma(j) - \Gamma(E-i-j)|$.
Note that the quantity $c_{ij}$ is the number of $v \in V$ whose neighbors are exactly $i$ and $j$.
Then, we can run the algorithm presented in Section~\ref{subsec:ultrametric} with $H_{ij} = - |\Gamma(i) \cap \Gamma(j) - \Gamma(E-i-j)|$ to obtain a matrix $A$ satisfying the conditions in~\eqref{eq:quad-cond-1},~\eqref{eq:quad-ultrametric}, and~\eqref{eq:A-submodularity}.
We note that the standard curvature can only handle vertices in $V$ that have only one neighbor in $E$, whereas our argument can handle vertices in $V$ that have two neighbors in $E$.





\section{Applications}\label{sec:applications}
In this section, we apply Theorem~\ref{thm:main} to obtain better approximation guarantees for the facility location problem and the sum of weighted matroid rank functions.
In these applications, we need to use the specific structures of these problems.



\subsection{Facility location}
In the \emph{facility location problem}, there are a set $I$ of customers and a set $E$ of possible locations of facilities.
Each customer $i \in I$ has a revenue $w_{ij} \geq 0$ for the facility $j \in E$.
We assume that customers will select the available facility of maximum revenue.
The task is to select a set $X \subseteq E$ of cardinality $k$ that maximizes
\[
    f(X) = \sum_{i \in I} \max_{j \in X} w_{ij},
\]
where we conventionally define $f(\emptyset)=0$.
Evidently, $f$ is a nonnegative monotone submodular function.

We can decompose $f$ into $g$ and $h$ as follows.
For $i \in I$, we denote $w_{i, \min}:= \min_{j \in E}w_{ij}$ and let $\bar{w}_{ij} := w_{ij} - w_{i,\min}$ ($i \in I$, $j \in E$).
Then, we can rewrite $f$ as
\[
    f(X) = \sum_{i \in I} \max_{j \in X} \bar{w}_{ij} + \left(\sum_{i \in I} w_{i,\min}\right)[X \neq \emptyset],
\]
where $[X \neq \emptyset]$ is the indicator function of the nonemptiness of $X$.
Let $\tilde{f}(X) = \sum_{i \in I} \max_{j \in X} \bar{w}_{ij}$ be the first term.
We further subtract a modular function $\ell(X) = \sum_{j \in X} \tilde{f}(j \mid E-j)$ from $\tilde{f}$.
Note that since $f(j \mid E - j) = [w_{ij} - \max_{k \neq j}w_{ij}]^+ = [\bar{w}_{ij} - \max_{k \neq j}\bar{w}_{ij}]^+ = \tilde{f}(j \mid E-j)$ for $j \in E$, this modular term is exactly the same one in the curvature decomposition for the original function $f$.
Then, we define $g(X) := \tilde{f}(X) - \ell(X)$ and $h(X):= \ell(X) + \left(\sum_{i \in I} w_{i,\min}\right)[X \neq \emptyset]$.
One can easily check that $g$ is monotone submodular and $h$ is \M-concave.

We can show that the generalized curvature $\gamma_h$ is no more than the standard curvature $c$.
\begin{lemma}\label{lem:h-curvature-facility-location}
    $\gamma_h \leq c - \frac{\sum_{i \in I}w_{i,\min}}{\sum_{i \in I}w_{i, \max}}$, where $w_{i, \max}:= \max_{j \in E}w_{ij}$ ($i \in E$).
\end{lemma}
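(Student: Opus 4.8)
The plan is to compare the facility-location decomposition $f = g + h$ with the decomposition $f = g_0 + h_0$ induced by the standard curvature, where $h_0(X) := \sum_{j \in X} f(j \mid E - j)$ is the modular ``easy part'' and $g_0 := f - h_0$. The excerpt already records the identity $f(j \mid E - j) = \tilde f(j \mid E - j)$ for every $j \in E$; this says exactly that the modular function $\ell$ subtracted from $\tilde f$ in the construction of $g$ coincides with $h_0$, and it follows at once from $\bar w_{ij} - \bar w_{ik} = w_{ij} - w_{ik}$ (so the singleton marginal gains of $\tilde f$ and of $f$ agree). Using in addition that $f(X) = \tilde f(X) + \bigl(\sum_{i \in I} w_{i,\min}\bigr)[X \neq \emptyset]$, I would deduce the clean relation
\[
  g(X) = \tilde f(X) - \ell(X) = g_0(X) - \Bigl(\sum_{i \in I} w_{i,\min}\Bigr)[X \neq \emptyset], \qquad X \subseteq E.
\]

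With this identity in hand I would evaluate $\gamma_h = \max_X g(X)/f(X)$, where the maximum ranges over the $X$ with $f(X) > 0$ (these are the nonempty sets carrying positive revenue; sets with $f(X) = 0$ have $g(X) = h(X) = 0$ and give the indeterminate $0/0$ excluded by the definition). For such an $X$, dividing the displayed relation by $f(X)$ gives $g(X)/f(X) = g_0(X)/f(X) - \bigl(\sum_{i \in I} w_{i,\min}\bigr)/f(X)$. The first term is at most $c$, since the defining property of the curvature, $h_0(X) \geq (1-c)f(X)$, yields $g_0(X) = f(X) - h_0(X) \leq c\,f(X)$. For the second term I would use the trivial bound $f(X) = \sum_{i \in I} \max_{j \in X} w_{ij} \leq \sum_{i \in I} \max_{j \in E} w_{ij} = \sum_{i \in I} w_{i,\max}$ together with $\sum_{i \in I} w_{i,\min} \geq 0$ to obtain $\bigl(\sum_{i \in I} w_{i,\min}\bigr)/f(X) \geq \bigl(\sum_{i \in I} w_{i,\min}\bigr)/\bigl(\sum_{i \in I} w_{i,\max}\bigr)$. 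Combining the two bounds gives $g(X)/f(X) \leq c - \bigl(\sum_{i \in I} w_{i,\min}\bigr)/\bigl(\sum_{i \in I} w_{i,\max}\bigr)$ for every admissible $X$, and taking the maximum over $X$ proves the lemma.

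I do not expect a genuine obstacle: the proof is essentially bookkeeping built around the identity $\ell = h_0$. The two points that deserve a sentence of care are (a) justifying $\ell = h_0$ cleanly from the singleton-marginal identity, and (b) handling the $0/0$ sets, for which restricting the maximum to $\{X : f(X) > 0\}$ is harmless. As a sanity check, the argument also shows for free that the right-hand side $c - \bigl(\sum_{i \in I} w_{i,\min}\bigr)/\bigl(\sum_{i \in I} w_{i,\max}\bigr)$ is nonnegative, as it must be, since $g \geq 0$ forces $\gamma_h \geq 0$.
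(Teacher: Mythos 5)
Your proof is correct and is essentially the paper's argument: the paper bounds $h(X)/f(X) = \ell(X)/f(X) + \bigl(\sum_i w_{i,\min}\bigr)/f(X) \geq (1-c) + \bigl(\sum_i w_{i,\min}\bigr)/f(E)$ for nonempty $X$, which is exactly your computation written for $h$ instead of its complement $g$. The two ingredients — $\ell$ coinciding with the standard curvature's modular part, and $f(X) \leq f(E) = \sum_i w_{i,\max}$ by monotonicity — are the same in both.
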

\begin{proof}
    For any nonempty $X \subseteq E$, we have
    \begin{align*}
        \frac{h(X)}{f(X)}
        = \frac{\ell(X)}{f(X)} + \frac{\sum_{i \in I} w_{i,\min}}{f(X)}
        \geq 1 - c + \frac{\sum_{i \in I} w_{i,\min}}{f(E)}
        = 1 - c +  \frac{\sum_{i \in I}w_{i,\min}}{\sum_{i \in I}w_{i, \max}},
    \end{align*}
    where the inequality follows from the definition of the curvature and the monotonicity of $f$.
    Therefore, $\gamma_h = 1 - \min_{X \subseteq E}\frac{h(X)}{f(X)} \leq c - \frac{\sum_{i \in I}w_{i,\min}}{\sum_{i \in I}w_{i, \max}}$.
\end{proof}

\subsection{Sums of weighted matroid rank functions}
A weighted matroid rank function is an \M-concave function (see Example~\ref{ex:weighted-rank}), and its sum is a monotone submodular function.
A sum of weighted matroid rank functions includes the coverage function, the facility location function, and many others, but not all monotone submodular functions.
Here, we show that if the objective function is a sum of weighted matroid rank functions, we can obtain an improved approximation guarantee.

Let $f(X) = \sum_{i \in I} f_i(X)$, where $f_i(X) = \max\{\bw_i(J): J \subseteq X, J \in \caI_i\}$ for some vector $\bw_i \in \bbR_+^E$, and $\caI_i$ is the independent set family of a matroid ($i \in I$).
For $i \in I$, let $w_{i, \min} := \min_{e\in E}\bw_i(e)$.
Let $\tilde{f}(X) = \sum_{i \in I} \tilde{g}_i(X)$, where $\tilde{g}_i(X)$ is the weighted matroid rank function with the reduced weight $\bw_i - w_{i, \min}\bfone$ for $i \in I$.
Define $\ell(X) = \sum_{j \in X} \tilde{f}(j \mid E - j) = \sum_{j \in X} f(j \mid E-j)$, where the second equality follows from the fact that all matroid bases have the same cardinality.
Finally, let us define $g(X) := \tilde{f}(X) - \ell(X)$ and $h(X) := \ell(X) + (\sum_{i \in I} w_{i, \min})[X \neq \emptyset]$.
The proof of the following lemma is similar to that of Lemma~\ref{lem:h-curvature-facility-location}.
\begin{lemma}
    $\gamma_h \leq c - \frac{\sum_{i \in I}w_{i,\min}}{\sum_{i \in I}W_i}$, where $c$ is the standard curvature, and $W_i := f_i(E)$ is the maximum weight of the bases in $(E, \caI_i)$ for each $i$.
\end{lemma}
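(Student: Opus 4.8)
The plan is to mimic the proof of Lemma~\ref{lem:h-curvature-facility-location} almost verbatim, replacing ``$w_{i,\max}$'' by ``$W_i = f_i(E)$'' throughout. The key point is that the construction of $g$, $\ell$, and $h$ here is the exact analogue of the facility-location construction: we shift each weight vector $\bw_i$ down by its minimum coordinate $w_{i,\min}$, observe that the shifted sum $\tilde f$ differs from $f$ by the constant $(\sum_i w_{i,\min})[X \neq \emptyset]$ (using that every base of $(E,\caI_i)$ has the same cardinality, so subtracting a uniform $w_{i,\min}$ from all coordinates lowers $f_i(X)$ by $w_{i,\min}\cdot r_i$ for a \emph{fixed} rank $r_i$ whenever $X$ contains a base, and for our purposes we only need the value on nonempty feasible $X$; more carefully one checks $f_i(X) = \tilde g_i(X) + w_{i,\min}\,r_i(X)$ and then handles the rank carefully — see the obstacle below), and then we peel off the modular function $\ell(X) = \sum_{j\in X} f(j\mid E-j)$, which by the cardinality-of-bases fact equals $\sum_{j\in X}\tilde f(j\mid E-j)$.

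First I would record the decomposition identities: $g = \tilde f - \ell$ is monotone submodular (it is a sum of weighted matroid rank functions minus the standard-curvature modular part, which is exactly the classical curvature decomposition applied to $\tilde f$, hence monotone submodular by the same argument used for $f$), and $h = \ell + (\sum_i w_{i,\min})[X\neq\emptyset]$ is \M-concave (a nonnegative modular function plus a nonnegative multiple of the \M-concave indicator $[X\neq\emptyset]$, and a sum of \M-concave functions on a common ground set that are ``aligned'' — here one uses that $[X\neq\emptyset] = \min\{1,|X|\}$ is \M-concave and that adding a modular function preserves \M-concavity). Nonnegativity of both pieces follows as in the facility-location case. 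Then I would run the same chain of inequalities as in Lemma~\ref{lem:h-curvature-facility-location}: for any nonempty $X\subseteq E$,
\begin{align*}
    \frac{h(X)}{f(X)} = \frac{\ell(X)}{f(X)} + \frac{\sum_{i\in I} w_{i,\min}}{f(X)} \geq (1-c) + \frac{\sum_{i\in I} w_{i,\min}}{f(E)} = (1-c) + \frac{\sum_{i\in I} w_{i,\min}}{\sum_{i\in I} W_i},
\end{align*}
where the middle inequality uses $\ell(X) \geq (1-c)f(X)$ (definition of the standard curvature $c$ of $f$, together with $\ell(X) = \sum_{j\in X} f(j\mid E-j)$) and the monotonicity $f(X)\leq f(E)$, and the last equality uses $f(E) = \sum_i f_i(E) = \sum_i W_i$. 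Taking the minimum over nonempty $X$ and using the definition $\gamma_h = 1 - \min_X h(X)/f(X)$ gives $\gamma_h \leq c - \frac{\sum_{i\in I} w_{i,\min}}{\sum_{i\in I} W_i}$.

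The main obstacle — the one place where the argument is genuinely different from facility location — is justifying that the weight-shift really produces the clean form $f(X) = \tilde f(X) + (\sum_i w_{i,\min})[X\neq\emptyset]$, equivalently $f_i(X) = \tilde g_i(X) + w_{i,\min}[X\neq\emptyset]$ is \emph{false} in general: subtracting $w_{i,\min}$ from every coordinate decreases $f_i(X)$ by $w_{i,\min}\cdot r_i(X)$, where $r_i(X)$ is the rank of $X$, not by $w_{i,\min}$. So the decomposition as literally written only behaves well when every nonempty $X$ under consideration has full rank in each matroid, or one must instead define $\tilde g_i$ and $h$ so that the rank function (rather than the indicator $[X\neq\emptyset]$) carries the shift. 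I would resolve this by checking that, on the feasibility set of interest and in particular at $X=O$ and $X=E$, the relevant quantities still satisfy $\ell(X)\geq (1-c)f(X)$ and $f(E)=\sum_i W_i$, which is all the inequality chain above actually requires; alternatively, one notes that restricting attention to matroids in which the shift contributes exactly $w_{i,\min}$ per base (e.g. when $k$ is at least the rank, or when one absorbs the rank dependence into $\ell$) recovers the stated bound, and that the lemma should be read with that understanding. Spelling out this rank bookkeeping carefully — and confirming that the ``similar to Lemma~\ref{lem:h-curvature-facility-location}'' claim survives it — is where the real work lies; everything else is a transcription of the facility-location proof.
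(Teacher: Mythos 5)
Your chain of inequalities is exactly the proof the paper intends (the paper gives no proof of this lemma beyond saying it is ``similar to'' Lemma~\ref{lem:h-curvature-facility-location}), and it is correct: it uses only $h(X)=\ell(X)+\sum_{i}w_{i,\min}$ for nonempty $X$, the curvature bound $\ell(X)=\sum_{j\in X}f(j\mid E-j)\geq(1-c)\sum_{j\in X}f(j)\geq(1-c)f(X)$, monotonicity of $f$, and $f(E)=\sum_i W_i$. Note that this derivation never invokes the identity $f=g+h$, so the asserted numerical bound on $1-\min_{X}h(X)/f(X)$ follows from your argument irrespective of the obstacle you raise.

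That obstacle is nonetheless genuine, and it is a defect of the paper's construction rather than of your inequality chain. Since $\bw_i(J)=(\bw_i-w_{i,\min}\bfone)(J)+w_{i,\min}|J|$ and, for nonnegative weights, the maximum over independent $J\subseteq X$ is attained at a maximal independent subset of $X$, one has $f_i(X)=\tilde g_i(X)+w_{i,\min}r_i(X)$ with $r_i$ the rank function of $(E,\caI_i)$; hence $g+h=f$ as the paper defines them only when each $r_i(X)$ collapses to $[X\neq\emptyset]$ (rank-one matroids, i.e.\ facility location) or all $w_{i,\min}=0$. Be aware, though, that neither of your suggested repairs closes this cleanly: letting the rank functions carry the shift replaces $[X\neq\emptyset]$ by $\sum_i w_{i,\min}r_i(X)$, a sum of matroid rank functions, which is not \M-concave in general; while keeping $h$ as written and redefining $g:=f-h$ destroys submodularity of $g$, since $\Hess_g(\emptyset)_{jk}$ picks up a contribution $+w_{i,\min}$ from every matroid $i$ in which $\{j,k\}$ is independent. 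So your proof of the stated inequality stands, but the validity of the decomposition itself (and hence the applicability of Theorem~\ref{thm:main} to it) requires a real fix --- e.g.\ shifting the weights of only a single matroid, whose weighted rank function can be absorbed into the \M-concave part --- and this should be stated rather than left as ``the lemma should be read with that understanding.''
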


In~\cite{Lin2012}, Lin and Bilmes proposed a linear mixture model of simple monotone submodular functions.
One of the most general classes in their model is a linear mixture of weighted matroid rank functions.
Suppose that a function $f:2^E \to \bbR_+$ is obtained by inference for this linear mixture model.
Then, we can write $f(X) = \sum_{i \in I}\alpha_i f_i(X)$, where $\alpha_i > 0$ is a mixture coefficient, and $f_i(X)$ is a weighted matroid rank function ($i \in I$).
If some coefficient $\alpha_{i^*}$ is dominant, one can consider the following straightforward decomposition: $g(X) = \sum_{i \neq i^*}\alpha_i f_i(X)$ and $h(X) = \alpha_{i^*}f_{i^*}(X)$. 
Then, we can expect the resulting curvature to be small, although the actual value depends on the form of $f_i\;(i \in I)$.

\subsection*{Acknowledgment}
The authors thank Yuni Iwamasa for pointing out a reference~\cite{Farach1995} on ultrametric fitting.

\bibliographystyle{IEEEtranS}
\bibliography{mconcave}
\end{document}